\def\d{~\textrm{d}}
\def\Li{\textrm{Li}}
\begin{document}

\title{Excuse Me! \\or The Courteous Theatregoers' Problem
\thanks{An extended abstract of this paper appeared in
the Proceedings of Seventh International Conference on
Fun with Algorithms,
July 1--3, 2014, Lipari Island, Sicily, Italy, Springer LNCS.}
}
\author{
Konstantinos Georgiou\inst{1}
\and
Evangelos Kranakis\inst{2}\thanks{Research supported in part by NSERC Discovery grant.}
\and
Danny Krizanc\inst{3}
}

\institute{
Department of Combinatorics \& Optimization, University of Waterloo
\and
School of Computer Science, Carleton University
\and
Department of Mathematics \& Computer Science,
Wesleyan University
}
\maketitle

\begin{abstract}
Consider a theatre consisting of $m$ rows each containing $n$ seats. 
Theatregoers enter the theatre along aisles and pick a row which they enter along one
of its two entrances  so as to occupy a seat. 
Assume they select their seats uniformly and
independently at random among the empty ones.
A row of seats is narrow and an occupant who is already
occupying a seat is blocking passage to new incoming theatregoers.
As a consequence, occupying a specific seat depends on
the courtesy of theatregoers and their
willingness to get up so as to create  free space
that will allow passage to others. Thus,    
courtesy facilitates and may well increase the overall
seat occupancy of the theatre. We say a theatregoer
is {\em courteous} if (s)he will get up to let others pass. Otherwise,
the theatregoer is {\em selfish}. 
A set of theatregoers 
is {\em courteous with probability $p$} (or {\em $p$-courteous}, for
short) if 
each theatregoer in the set is courteous with probability $p$, randomly and
independently.
It is assumed that the behaviour of a theatregoer
does not change during the occupancy of the row.
Thus, $p=1$ represents the case where all theatregoers are 
courteous and $p=0$ when they are all selfish. 

In this paper, we
are interested in the following question:
what is the expected number of
occupied seats as a function of the total
number of seats in a theatre, $n$, and the probability that a theatregoer
is courteous, $p$? 
We study and analyze interesting variants of this problem
reflecting behaviour of the theatregoers as entirely selfish, 
and  $p$-courteous for a row of seats 
with one or two entrances and as a consequence for a theatre with $m$ rows 
of seats with multiple
aisles.  We also consider the case where seats in a row are chosen according
to the geometric distribution and the Zipf distibrution (as opposed to the uniform distribution) and provide
bounds on the occupancy of a row (and thus the theatre) in each case. 
Finally, we propose
several open problems for other seating probability
distributions and theatre seating arrangements.

\vspace{0.5cm}
\noindent
{\bf Key words and phrases.}
($p$-)Courteous,
Theatregoers,
Theatre occupancy,
Seat,
Selfish,
Row,
Uniform distribution,
Geometric distribution,
Zipf distribution.
\end{abstract}


\section{Introduction}

A group of Greek tourists is vacationing on the island of Lipari and they find
out that the latest release of their favourite playwright is playing 
at the local theatre (see Figure~\ref{lipari-fig}), {\em Ecclesiazusae} by Aristophanes, a big winner at last year's (391 BC)
Festival of Dionysus. Seating at the theatre is open (i.e., the seats are chosen by the audience members as 
they enter). The question arises as to whether they will be able to find seats. As it turns out
this depends upon just how courteous the other theatregoers are that night. 

%

Consider a theatre with $m$ rows containing $n$ seats each. 
Theatregoers enter the theatre along aisles, choose a row, and enter it from one of its ends, wishing
to occupy a seat. They
select their seat in the row uniformly and
independently at random among the empty ones.
The rows of seats are narrow and
if an already sitting theatregoer is not willing to get up
then s(he) blocks passage to the selected seat and the incoming theatregoer
is forced to select a seat among unoccupied seats
between the row entrance and the theatregoer who refuses to budge. 
Thus, the selection and overall occupancy of seats depends on
the courtesy of sitting theatregoers, i.e.,  their
willingness to get up so as to create  free space
that will allow other theatregoers go by.

%

An impolite theatregoer, i.e., one that never gets up from
a position s(he) already occupies, is referred to as {\em selfish} theatregoer.
Polite theatregoers (those that will get up to let someone pass) are
referred to as {\em courteous}. 
On a given evening we expect some fraction of the audience to be selfish
and the remainder to be courteous. We say a set of theatregoers is
{\em $p$-courteous} 
if each individual in the set is courteous with probability $p$ and 
selfish with probability $1-p$.
We assume that the status of a
theatregoer (i.e., selfish or courteous)
is independent of the other theatregoers and it
remains the same throughout the
occupancy of the row. Furthermore, theatregoers select a vacant seat uniformly
at random. They enter a row from one end
and inquire (``Excuse me''), if necessary, whether an already
sitting theatregoer is courteous enough to let him/her go by and occupy
the seat selected. If a selfish theatregoer is encountered, a seat is selected
at random among the available unoccupied ones, should any exist. 
We are interested in the following question:
\begin{quote}
What is the expected number of seats
occupied by theatregoers when all
new seats are blocked,
as a function of the total
number of seats and the theatregoers' probability $p$ of being courteous? 
\end{quote}

We first study the problem on
a single row with either one entrance or two.
For the case $p=1$ it is easy to see that the row will be fully occupied when
the process finishes. We show that for $p=0$ (i.e., all theatregoers are selfish)
the expected number of occupied seats is only $2 \ln n + O(1)$ 
for a row with
two entrances. Surprisingly, for any fixed $p<1$ we show that this is only 
improved by essentially a constant factor of $\frac{1}{1-p}$. 

Some may
argue that the assumption of choosing seats uniformly at random is somewhat
unrealistic. People choose their seats for a number of reasons (sight lines, privacy, etc.)
which may  result in a nonuniform occupancy pattern. A natural tendency would be to 
choose seats closer to the centre of the theatre to achieve better viewing. We
attempt to model this with seat choices made via the geometric distribution
with a strong bias towards the centre seat for the central section of the theatre
and for the aisle seat for sections on the sides of the theatre. The results
here are more extreme, in that for $p$ constant, we expect only a constant number
of seats to be occupied when there is a bias towards the entrance of a row while we expect
at least half the row to be filled when the bias is away from the entrance.
In a further
attempt to make the model more realistic we consider  the Zipf distribution on the seat choices, as this 
distribution often arises when considering the cumulative decisions of a
group of humans (though not necessarily Greeks)\cite{zipf}. We show that under this distribution
when theatregoers
are biased towards the entrance to a row, the number of occupied seats is
$\Theta(\ln \ln n)$ while if the bias is towards the centre of the row the number
is $\Theta(\ln^2 n)$.
If we assume that theatregoers proceed to another row if their initial choice
is blocked it is easy to
use our results for single rows with one and two entrances to derive bounds
on the total number of seats occupied in a theatre with multiple rows and aisles.

\subsection{Related work}

Motivation for seating arrangement problems
comes from polymer chemistry and statistical physics in
\cite{flory1939intramolecular,olson1978markov} (see also \cite{strogatz2012joy}[Chapter 19]
for a related discussion).
In particular, the number and size of random independent sets 
on grids (and other graphs) is of great interest in
statistical physics for analyzing {\it hard} particles in lattices
satisfying the exclusion rule, i.e., if a vertex of a lattice is 
occupied by a particle its neighbors must be vacant, and 
have been studied
extensively both in statistical physics and combinatorics 
\cite{baxter,hard1,hard2,calkin-wilf,finch}.

Related to this is the ``unfriendly seating'' arrangement problem 
which
was posed by
Freedman and Shepp \cite{freedman}:
Assume there are $n$ seats in a row at a 
luncheonette and people sit down one at a time at random.
Given that 
they are unfriendly and never sit next to one another,
what is the expected number
of persons to sit down, assuming no moving is allowed?
The resulting density has been studied in
\cite{freedman,friedman,mackenzie} for a $1 \times n$ lattice and 
in \cite{georgiou2009random} for the $2\times n$ and
other lattices.
See also \cite{kk} for a related application to privacy.

Another related problem considers the following natural 
process for generating a maximal independent set of a 
graph~\cite{mitzenmacher}.
Randomly choose a node and place it in the independent set. 
Remove the node and all its neighbors
from the graph. Repeat this process until no nodes remain. 
It is of interest to analyze the expected size of the
resulting maximal independent set. For investigations on a similar
process for generating maximal matchings 
the reader is referred to \cite{aronson1995randomized,dyer1991randomized}.

\subsection{Outline and results of the paper}

We consider the above problem for the case of
a row that has one entrance and the case with two entrances.   We
develop closed form formulas, or almost tight bounds up to multiplicative constants, for the expected
number of occupied seats in a row for any given $n$ and $p$.
First we study  
the simpler problem
for selfish theatregoers, i.e., $p=1$,
in Section~\ref{selfish:sec}. In
Section~\ref{courteous:sec}, we consider $p$-courteous theatregoers.
In these sections, the placement of theatregoers obeys the
uniform distribution.
Section~\ref{geo:sec} considers what happens with $p$-courteous theatregoers under the
geometric distribution.
In Section~\ref{zipf:sec} we look at  theatregoers  whose
placement obeys the Zipf distribution.
And in Section~\ref{theater:sec} we show how the previous results may extended 
to theater arrangements with multiple rows and aisles.
Finally, in Section~\ref{other:sec} we conclude by proposing several open problems
and directions for further research. Details of any missing proofs
can be found in the Appendix.

\section{Selfish Theatregoers}
\label{selfish:sec}

In this section we consider the occupancy problem for a row of seats
arranged next to each other in a line.
First we consider theater occupancy with
selfish theatregoers in that a theatregoer occupying a
seat never gets up to allow another theatregoer to go by.
We consider two types of rows, either
open on one side or open on both sides.
Although the results presented here are easily derived from those 
in Section~\ref{courteous:sec} for the $p$-courteous case,
our purpose here is to introduce the methodology in
a rather simple theatregoer model.


Consider an arrangement of $n$ seats in a row
(depicted in Figure~\ref{fig:th1} as squares).
\begin{figure}[!htb]
\begin{center}
\includegraphics[width=8cm]{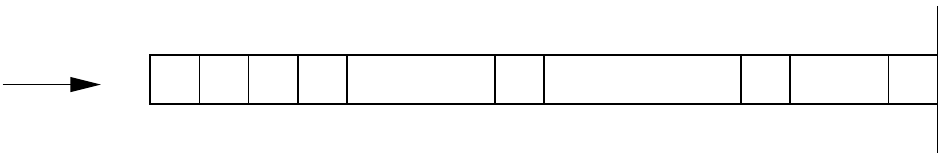}
\end{center}
\caption{An arrangement of seats; theatregoers may enter only from the left
and the numbering of the seats is $1$ to $n$ from left to right.}
\label{fig:th1}
\end{figure}
Theatregoers enter in sequence one after 
the other and may enter the arrangement only from the left.
A theatregoer occupies a seat at random with the uniform distribution
and if selfish (s)he
blocks passage 
to her/his right. What is the expected number 
of occupied seats?


\begin{theorem}[Row with only one entrance]
\label{thm1}
The expected number of occupied seats by selfish theatregoers
in an arrangement of $n$ seats
in a row with single entrance is equal to $H_n$, the $n$th harmonic number.
\end{theorem}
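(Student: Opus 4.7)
My plan is to derive a recurrence for $E_n$, the expected number of occupied seats in a row of $n$ with a single entrance, by conditioning on where the first theatregoer ends up sitting, and then to solve the recurrence in closed form by induction.

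Two observations drive the argument. First, because every occupant is selfish, once somebody sits at position $k$, every seat in $\{k+1,\ldots,n\}$ is thereafter unreachable, and the leftmost occupant only moves leftward as new theatregoers arrive. Second, the crucial \emph{uniform re-pick lemma}: conditional on the current leftmost occupant being at seat $k$, the next theatregoer to sit does so at a seat distributed uniformly over $\{1,\ldots,k-1\}$. Indeed, if $j$ seats are already occupied (necessarily all in $\{k,k+1,\ldots,n\}$), a newcomer picks a uniformly random empty seat; with probability $(n-k-j+1)/(n-j)$ this pick is blocked, in which case the newcomer re-selects uniformly from the accessible empty seats $\{1,\ldots,k-1\}$. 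A direct calculation shows that for any fixed $i \in \{1,\ldots,k-1\}$,
\[
\frac{1}{n-j} + \frac{n-k-j+1}{n-j}\cdot\frac{1}{k-1} \;=\; \frac{1}{k-1},
\]
so each accessible seat is chosen with probability exactly $1/(k-1)$ regardless of how many people have already been seated.

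Together these observations show that after the first theatregoer sits at a seat $k$ chosen uniformly in $\{1,\ldots,n\}$, the remainder of the process is an independent copy of the original problem on a row of $k-1$ seats. Conditioning on $k$ therefore yields
\[
E_n \;=\; 1 + \frac{1}{n}\sum_{k=1}^{n} E_{k-1}, \qquad E_0 = 0.
\]
An induction on $n$ then verifies $E_n = H_n$: using the standard identity $\sum_{k=0}^{n-1} H_k = n H_n - n$, the inductive hypothesis gives $E_n = 1 + (nH_n - n)/n = H_n$.

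The most delicate step is the uniform re-pick lemma, together with the accompanying claim that the subproblem on $\{1,\ldots,k-1\}$ is a faithful independent copy of the original process (so that the recurrence is genuinely recursive and not tangled with the prior history). Once those are in hand, both the recurrence and its inductive solution are routine.
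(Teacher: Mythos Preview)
Your proof is correct and follows the same approach as the paper: both derive the recurrence $E_n = 1 + \frac{1}{n}\sum_{k=1}^{n} E_{k-1}$ by conditioning on the first seat taken, and then solve it (you by induction via $\sum_{k=0}^{n-1} H_k = nH_n - n$, the paper by differencing consecutive instances to obtain $E_n = E_{n-1} + 1/n$). Your uniform re-pick lemma makes explicit a point the paper passes over without comment, namely that the residual process on $\{1,\ldots,k-1\}$ is genuinely a fresh instance of the original problem.
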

\begin{proof} ({\bf Theorem~\ref{thm1}})
Let $E_n$ be the expected number of theatregoers occupying seats 
in a row of $n$ seats.
Observe that $E_0=0, E_1 =1$ and that the following recurrence is valid
for all $n \geq 1$.
\begin{eqnarray}
E_n &=& \label{maineq1}
1 + \frac{1}{n} \sum_{k=1}^{n} E_{k-1} 
= 1 + \frac{1}{n} \sum_{k=1}^{n-1} E_k.
\end{eqnarray}
The explanation for this equation is as follows. A theatregoer
may occupy any one of the seats from $1$ to $n$. If it
occupies seat number $k$ then seats numbered $k+1$ to $n$
are blocked while only seats numbered $1$ to $k-1$ may be
occupied by new theatregoers. 
It is not difficult to solve this recurrence. Write down
both recurrences for $E_n$ and $E_{n-1}$.
\begin{equation*}
nE_n 
=
n + \sum_{k=1}^{n-1} E_k 
\mbox{ and } 
(n-1)E_{n-1}
= n-1 + \sum_{k=1}^{n-2} E_k.
\end{equation*}
Substracting these two identities we see that
$nE_n - (n-1) E_{n-1} = 1 + E_{n-1}$.
Therefore $E_n = \frac{1}{n} + E_{n-1}$.
This proves Theorem~\ref{thm1}.
\qed
\end{proof}


Now consider an arrangement of $n$ seats
(depicted in Figure~\ref{fig:th2})
\begin{figure}[!htb]
\begin{center}
\includegraphics[width=10cm]{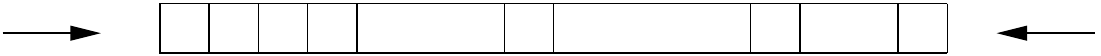}
\end{center}
\caption{An arrangement of $n$ seats; theatregoers may enter either from the right
or from the left.}
\label{fig:th2}
\end{figure}
with two entrances
such that 
theatregoers may enter only from either right or left.
In what follows, we invoke several times the approximate size of 
the harmonic number $H_n$ 
which can be expressed as follows
$$
H_n = \ln n + \gamma + \frac{1}{2n} + o (n),
$$
where $\gamma$ is Euler's constant~\cite{knuth}.

\begin{theorem}[Row with two entrances]
\label{thm1two}
The expected number of occupied seats by selfish theatregoers
in an arrangement of $n$ seats
in a row with two entrances is $2 \ln n$, 
asymptotically in $n$.
\end{theorem}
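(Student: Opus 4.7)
My plan is to reduce the two-entrance problem to the one-entrance problem solved in Theorem~\ref{thm1}. Let $F_n$ denote the expected number of occupied seats in a row of $n$ seats with two entrances. The first theatregoer to arrive enters from one side and, since the row is empty, picks a seat $k \in \{1,\ldots,n\}$ uniformly at random; this is independent of which side they came in from. Once that seat is taken, the row splits into two independent sub-rows: the $k-1$ seats to the left, reachable only from the left entrance (since the first theatregoer is selfish and blocks passage), and the $n-k$ seats to the right, reachable only from the right entrance. Each of these sub-rows is exactly an instance of the single-entrance problem from Theorem~\ref{thm1}.

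This immediately yields the recurrence
\begin{equation*}
F_n \;=\; 1 + \frac{1}{n}\sum_{k=1}^{n}\bigl(E_{k-1} + E_{n-k}\bigr) \;=\; 1 + \frac{2}{n}\sum_{k=0}^{n-1} H_k,
\end{equation*}
where I have used Theorem~\ref{thm1} to substitute $E_k = H_k$ and re-indexed one of the two sums (they coincide after reindexing by symmetry).

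The next step is to evaluate $\sum_{k=1}^{n-1} H_k$ in closed form. I would use the standard identity $\sum_{k=1}^{m} H_k = (m+1) H_m - m$, which follows by switching the order of summation: $\sum_{k=1}^m H_k = \sum_{k=1}^m \sum_{j=1}^k \frac{1}{j} = \sum_{j=1}^m \frac{m-j+1}{j} = (m+1)H_m - m$. Applying this with $m = n-1$ gives $\sum_{k=0}^{n-1} H_k = n H_{n-1} - (n-1)$, hence
\begin{equation*}
F_n \;=\; 1 + \frac{2\bigl(n H_{n-1} - (n-1)\bigr)}{n} \;=\; 2 H_{n-1} - 1 + \frac{2}{n}.
\end{equation*}

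Finally, using $H_n = \ln n + \gamma + O(1/n)$ from the asymptotic expansion quoted in the paper, we obtain $F_n = 2\ln n + (2\gamma - 1) + O(1/n) \sim 2\ln n$, as claimed. There is no real obstacle here beyond carefully justifying the independence of the two sub-rows after the first arrival, which is why I want to highlight that the first seat choice is uniform regardless of the entrance used, and that the selfish-and-blocking assumption is what makes the two sub-problems decouple cleanly into single-entrance instances.
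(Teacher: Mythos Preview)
Your proof is correct and follows essentially the same approach as the paper: both set up the recurrence $F_n = 1 + \frac{1}{n}\sum_{k=1}^n (E_{k-1}+E_{n-k})$, substitute $E_k = H_k$ from Theorem~\ref{thm1}, and use the asymptotics of the harmonic numbers. You go slightly further by evaluating $\sum_{k=0}^{n-1} H_k$ exactly via the identity $\sum_{k=1}^m H_k = (m+1)H_m - m$, obtaining the closed form $F_n = 2H_{n-1} - 1 + 2/n$, whereas the paper is content with $2\ln n + O(1)$; this is a refinement, not a different method.
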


%

\begin{proof}({\bf Theorem~\ref{thm1two}})
Let $F_n$ be the expected number of occupied seats
in a line with two entrances and $n$ seats.
Further, let $E_n$ be the expected number of theatregoers occupying seats 
in a line with a single entrance and $n$ seats, which is
the function defined
in the proof Theorem~\ref{thm1}.

Observe that 
\begin{eqnarray}
F_n &=& \label{maineq2}
1 + \frac{1}{n} \sum_{k=1}^n (E_{k-1} + E_{n-k})
\end{eqnarray}
The explanation for this is as follows. 
The first theatregoer may occupy any position
$k$ in the row of $n$ seats.
Being selfish, entry is possible only 
from one side of the row, i.e., the next seat that
can be occupied is numbered either
from $1$ to $k-1$ or from $k+1$ to $n$.  

It follows from Theorem~\ref{thm1} and using the standard approximation
for the harmonic number (see~\cite{knuth}) that
\begin{eqnarray*}
F_n &=&
1 + \frac{1}{n} \sum_{k=1}^n (H_{k-1} + H_{n-k})
=
1 + \frac{2}{n} \sum_{k=1}^n H_{k-1} 
=
2 \ln n + O(1),
\end{eqnarray*}
which proves Theorem~\ref{thm1two}.
\qed
\end{proof}

\section{Courteous Theatregoers}
\label{courteous:sec}

Now consider the case where theatregoers are 
courteous with probability $p$ and
selfish with
probability $1-p$. 
We assume that the probabilistic behaviour of
the theatregoers is independent of each other and
it is set at the start and remains the same throughout
the occupancy of the row of seats. Analysis
of the occupancy will be done separately
for rows of seats with one and two entrances
(see Figures~\ref{fig:th1}~and~\ref{fig:th2}). Again, 
seat choices are made uniformly at random. 
Observe that
for $p=1$ no theatregoer is selfish and therefore all
seats in a row of seats will be occupied.
Also, since
the case $p=0$ whereby all theatregoers are selfish
was analyzed in the last section, we can assume without
loss of generality
that $0 < p < 1$.


\begin{theorem}[Row with only one entrance]
\label{thm01p}
Assume $0 < p < 1$ is given.
The
expected number $E_n$ of occupied seats in an arrangement of $n$ seats
in a row having only one entrance at an endpoint
with $p$-courteous
theatregoers is given by the expression
\begin{equation}
\label{pach1}
E_n = \sum_{k=1}^n \frac{1-p^k}{k(1-p)},
\end{equation}
for $n \geq 1$. 
In particular, for fixed $p$, $E_n$ is $\frac{H_n + \ln (1-p)}{1-p}$,
asymptotically in $n$.
\end{theorem}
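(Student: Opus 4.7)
The plan is to set up a recurrence for $E_n$ by conditioning on the choice of seat (uniform in $\{1,\dots,n\}$) and on the type of the first theatregoer to sit. A preliminary observation I would make is that the stated seat-selection rule (pick uniformly among all empty seats; if blocked, resample from the reachable empty ones) is distributionally equivalent to picking uniformly from the reachable empty seats in one shot, since the two-stage procedure gives each reachable seat probability $\frac{1}{|R|+|U|} + \frac{|U|}{(|R|+|U|)|R|} = \frac{1}{|R|}$. With this simplification in hand, if the first theatregoer lands at seat $k$ and is courteous (probability $p$) then no seat becomes permanently blocked, and by symmetry the remaining dynamics on the $n-1$ empty seats is equivalent to the process on a fresh row of length $n-1$; if the first theatregoer is selfish (probability $1-p$) then seats $k+1,\dots,n$ are permanently blocked and we are left with a fresh row of length $k-1$. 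This gives
\begin{equation*}
E_n \;=\; 1 \,+\, p\,E_{n-1} \,+\, \frac{1-p}{n}\sum_{k=0}^{n-1} E_k, \qquad E_0=0.
\end{equation*}

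Next I would verify the closed form $E_n=\sum_{k=1}^n \frac{1-p^k}{k(1-p)}$ by induction on $n$. Substituting the inductive hypothesis into the recurrence and swapping the order of summation in the double sum,
\begin{equation*}
\sum_{k=0}^{n-1}\sum_{j=1}^{k}\frac{1-p^j}{j(1-p)}\;=\;\sum_{j=1}^{n-1}(n-j)\,\frac{1-p^j}{j(1-p)},
\end{equation*}
the expression reduces, after combining the two inductive contributions and using $pn+(n-j)(1-p)=n-j(1-p)$, to
\begin{equation*}
E_n \;=\; \sum_{j=1}^{n-1}\frac{1-p^j}{j(1-p)} \,+\, \left(1-\frac{1}{n}\sum_{j=1}^{n-1}(1-p^j)\right).
\end{equation*}
It remains to check the identity $1-\frac{1}{n}\sum_{j=1}^{n-1}(1-p^j)=\frac{1-p^n}{n(1-p)}$, which follows by evaluating the geometric sum $\sum_{j=1}^{n-1}p^j=\frac{p-p^n}{1-p}$ and simplifying. (As an alternative route, one could mimic Theorem~\ref{thm1} by writing the recurrence for $nE_n$ and $(n-1)E_{n-1}$, subtracting, and solving a short linear recurrence; either way the algebra is essentially the same.)

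Finally, for the asymptotic statement I would split
\begin{equation*}
E_n \;=\; \frac{1}{1-p}\!\left[\,H_n \,-\, \sum_{k=1}^n \frac{p^k}{k}\,\right]
\end{equation*}
and invoke the Mercator identity $\sum_{k=1}^\infty \frac{p^k}{k}=-\ln(1-p)$, valid for $0<p<1$, noting that the tail $\sum_{k>n}p^k/k$ decays geometrically and hence is absorbed in the lower-order term. This yields $E_n=\frac{H_n+\ln(1-p)}{1-p}+o(1)$ as $n\to\infty$ for fixed $p$.

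The main obstacle I anticipate is the bookkeeping in the inductive step — swapping the summation order and coaxing the algebra into exactly the right shape to absorb the new $\frac{1-p^n}{n(1-p)}$ term. The setup of the recurrence and the asymptotic passage to $-\ln(1-p)$ are essentially routine once the equivalence between the two seat-selection rules is noted.
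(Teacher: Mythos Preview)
Your proof is correct and follows essentially the same route as the paper: both derive the identical recurrence $E_n = 1 + pE_{n-1} + \frac{1-p}{n}\sum_{k=0}^{n-1}E_k$ and then finish with the Mercator series $\sum_{k\geq 1}p^k/k=-\ln(1-p)$. The only difference is in how the recurrence is solved: the paper multiplies by $n$, subtracts the $(n-1)$-version, and reduces to a first-order recurrence for $D_n:=E_n-E_{n-1}$ (obtaining $D_n=\frac{1-p^n}{n(1-p)}$ directly), whereas you verify the closed form by a direct induction---your final identity $1-\frac{1}{n}\sum_{j=1}^{n-1}(1-p^j)=\frac{1-p^n}{n(1-p)}$ is precisely the paper's $D_n$. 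Your preliminary remark that the two-stage seat-selection rule is distributionally equivalent to sampling uniformly from the reachable seats is a worthwhile clarification that the paper leaves implicit.
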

\begin{proof} ({\bf Theorem~\ref{thm01p}})
Consider an arrangement of $n$ seats (depicted in Figure~\ref{fig:th1} as squares).
Let $E_n$ denote the expected number of occupied
positions  in an arrangement of $n$ seats
with single entrance 
at an endpoint
and $p$-courteous
theatregoers.
With this definition in mind we obtain the following
recurrence
\begin{eqnarray}
E_n 
&=& \label{rec01}
1 + p E_{n-1} 
+  \frac{1-p}{n} \sum_{k=1}^n E_{k-1}
\end{eqnarray}
where the initial condition $E_0 =0$ holds.

Justification for this recurrence is as follows.
Recall that we have a line
with single entrance on the left. Observe that with probability $1-p$
the theatregoer is selfish and if (s)he occupies
position $k$ then theatregoers arriving later
can only occupy a position in the interval $[1, k-1]$
with single entrance at $1$.
On the other hand, with probability $p$ the theatregoer is courteous
in which case the next person
arriving sees $n-1$ available seats as 
far as (s)he is concerned; where the first person
sat doesn't matter and what remains is a problem of size $n-1$.
This
yields the desired recurrence.

To simplify, multiply Recurrence~\eqref{rec01} by $n$
and combine similar terms to derive
\begin{eqnarray}
nE_n 
&=& \notag
n + (np+1-p)E_{n-1} + (1-p) \sum_{k=1}^{n-2} E_k .
\end{eqnarray}
A similar equation is obtained when we replace $n$ with $n-1$
\begin{eqnarray}
(n-1)E_{n-1} 
&=& \notag
n-1 + ((n-1)p+1-p)E_{n-2} + (1-p) \sum_{k=1}^{n-3} E_k .
\end{eqnarray}
If we substract these last two equations
we derive
$
nE_n -(n-1)E_{n-1}=
1 + (np+1-p)E_{n-1} - ((n-1)p+1-p)E_{n-2}
+(1-p) E_{n-2} .
$
After collecting similar terms. it follows that 
$nE_n = 1+ (n(1+p) -p) E_{n-1} -(n-1)p E_{n-2}$.

Dividing both sides of the last equation
by $n$ we obtain the following recurrence
\begin{eqnarray}
E_n 
&=& \notag
\frac{1}{n} + 
\left( 1 + p - \frac{p}{n} \right) E_{n-1}
-\left(1-\frac{1}{n} \right)pE_{n-2},
\end{eqnarray}
where it follows easily from the occupancy conditions that
$E_0 = 0, E_1 = 1, E_2 = \frac{3}{2} + \frac{p}{2}$. 
Finally,
if we define $D_n := E_n - E_{n-1}$, substitute
in the last formula and collect similar terms
we conclude that
\begin{eqnarray}
D_n
&=& \label{rec02}
\frac{1}{n} + 
\left(1-\frac{1}{n} \right)p D_{n-1},
\end{eqnarray}
where $D_1 = 1$. The solution of Recurrence~\eqref{rec02}
is easily shown to be $D_n = \frac{1-p^n}{n(1-p)}$ for $p<1$.
By telescoping we have the identity $E_n = \sum_{k=1}^n D_k $. The proof of the theorem is complete once we observe that $\sum_{k=1}^\infty p^k/k = - \ln (1-p)$.
\qed
\end{proof}

\begin{theorem}[Row with two entrances]
\label{thm02p}
Assume $0 < p < 1$ is given.
The
expected number $F_n$ of occupied seats in an arrangement of $n$ seats
in a row having two entrances at the endpoints
with probabilistically $p$-courteous
theatregoers is given by the expression
\begin{equation}
\label{pach2}
F_n =
-\frac{1-p^n}{1-p} + 2 \sum_{k=1}^n \frac{1-p^k}{k(1-p)},
\end{equation}
for $n \geq 1$. 
In particular, for fixed $p$, $F_n$ is $-\frac{1}{1-p} + 2\frac{H_n - \ln (1-p)}{1-p},$
asymptotically in $n$.
\end{theorem}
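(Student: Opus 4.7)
The plan is to mirror the strategy of Theorem~\ref{thm1two}, lifting it to the $p$-courteous setting, and then exploit a simple algebraic relationship with the one-entrance recurrence already established in the proof of Theorem~\ref{thm01p}. I would first condition on the first theatregoer's type and seat choice. With probability $p$ she/he is courteous: the occupied seat never blocks anyone, so the subsequent process behaves exactly like a fresh two-entrance row on the other $n-1$ seats, contributing $p F_{n-1}$. With probability $1-p$ she/he is selfish and sits in a uniformly random seat $k\in\{1,\dots,n\}$, cleaving the row into two independent single-entrance subrows of sizes $k-1$ and $n-k$, which together contribute $\frac{1-p}{n}\sum_{k=1}^{n}(E_{k-1}+E_{n-k}) = \frac{2(1-p)}{n}\sum_{j=0}^{n-1}E_j$. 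Adding $1$ for the first theatregoer yields
\[
F_n \;=\; 1 + pF_{n-1} + \frac{2(1-p)}{n}\sum_{j=0}^{n-1}E_j, \qquad F_0=0.
\]

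The crux of the argument is to pair this with Recurrence~\eqref{rec01}, which reads $E_n = 1+pE_{n-1}+\frac{1-p}{n}\sum_{j=0}^{n-1}E_j$. Doubling the latter and subtracting it from the recurrence for $F_n$ cancels the awkward partial sum and leaves the first-order linear relation
\[
F_n - 2E_n \;=\; -1 + p\,(F_{n-1}-2E_{n-1}),
\]
with $F_0 - 2E_0 = 0$. Setting $G_n := F_n - 2E_n$ and iterating (or telescoping) gives $G_n = -\sum_{j=0}^{n-1}p^j = -\frac{1-p^n}{1-p}$, hence
\[
F_n \;=\; 2E_n - \frac{1-p^n}{1-p},
\]
which is exactly~\eqref{pach2} once the closed form for $E_n$ from Theorem~\ref{thm01p} is substituted. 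The asymptotic claim follows immediately from the asymptotic for $E_n$ already established in Theorem~\ref{thm01p}, together with the fact that $\frac{1-p^n}{1-p}\to\frac{1}{1-p}$ as $n\to\infty$ for fixed $p<1$.

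The main obstacle is spotting the linear-combination trick in the second step; without it one is tempted to replicate the longer two-step telescoping (multiply by $n$, shift, subtract, pass to first differences) that was used for $E_n$. That route is noticeably more cumbersome here because of the extra $E_{n-k}$ summand and the need to manage the left/right symmetry of the two-entrance row explicitly. Once the relation $F_n - 2E_n = -1 + p(F_{n-1}-2E_{n-1})$ is recognized, everything else is mechanical.
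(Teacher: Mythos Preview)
Your proposal is correct and follows essentially the same approach as the paper. Both set up the recurrence $F_n = 1 + pF_{n-1} + \frac{2(1-p)}{n}\sum_{j=0}^{n-1}E_j$, use Recurrence~\eqref{rec01} to eliminate the partial sum (the paper solves \eqref{rec01} for the sum and substitutes, you subtract twice \eqref{rec01} directly---the same manipulation), arrive at the linear recurrence $F_n - 2E_n = -1 + p(F_{n-1}-2E_{n-1})$, and solve it to obtain $F_n - 2E_n = -\frac{1-p^n}{1-p}$; the only cosmetic difference is that you start from $G_0=0$ whereas the paper starts from $\Delta_1=-1$.
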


\begin{proof} ({\bf Theorem~\ref{thm02p}})
Now consider an arrangement of $n$ seats
(depicted in Figure~\ref{fig:th2}).
For fixed $p$,
let $F_n$ denote the expected number of occupied
positions  in an arrangement of $n$ seats
with two entrances one at each endpoint 
and probabilistically $p$-courteous
theatregoers.
Let $E_n$ denote the expected number of occupied
positions  in an arrangement of $n$ seats
with single entrance 
and probabilistically $p$-courteous
theatregoers (defined in Theorem~\ref{thm01p}).
With this definition in mind we obtain the following
recurrence
\begin{eqnarray}
F_n 
&=& \label{rec002}
1 + p F_{n-1} 
+  \frac{1-p}{n} \sum_{k=1}^n (E_{n-k} + E_{k-1})
\end{eqnarray}
where the initial conditions $E_0 = F_0 =0$ hold.

Justification for this recurrence is as follows.
We have a line
with both entrances at the endpoints open. 
Observe that with probability $1-p$
the theatregoer is selfish and if it occupies
position $k$ then theatregoers arriving later
can occupy positions in $[1, k-1] \cup [k+1, n]$
such that in the interval $[1, k-1]$
a single entrance is open at $1$ and 
in the interval $[k+1, n]$ a single
entrance is open at $n$.
On the
other hand, like in the single entrance case,
with probability $p$ the theatregoer is courteous in which case the next person
arriving sees $n - 1$ 
available seats as far as (s)he is concerned; where the first
person sat doesn't matter.
This yields the desired recurrence.

Using Equation~\eqref{rec01},
it is clear that Equation~\eqref{rec002} can be simplified to
\begin{eqnarray}
F_n 
&=& \notag 
1 + p F_{n-1} 
+  \frac{2(1-p)}{n} \sum_{k=1}^n E_{k-1}
= \notag 
1 + p F_{n-1} 
+  2(E_n -1 -p E_{n-1}),
\end{eqnarray}
which yields
\begin{eqnarray}
F_n -1 - p F_{n-1} 
&=&
\label{rec0023} 
2(E_n -1 -p E_{n-1}) 
\end{eqnarray}

Finally if we define $\Delta_n := F_n -2E_n$ then Equation~\eqref{rec0023} 
gives rise to the following recurrence
\begin{equation}
\label{rec0024} 
\Delta_n = -1 + p \Delta_{n-1},
\end{equation}
with initial condition $\Delta_1 = F_1 - 2E_1 = - 1$.
Solving Recurrence~\eqref{rec0024} we conclude that
$\Delta_n = -\frac{1-p^n}{1-p}$, for $p<1$, and $\Delta_n =-n$,
otherwise. Therefore,
$
F_n = \Delta_n + 2E_n ,
$
from which we derive the desired Formula~\eqref{pach2}. 
Using the expansion of $\ln (1-p)$ in a Taylor series (in the variable $p$)
we get the claimed expression for fixed $p$
and conclude the proof
of the theorem.
\qed
\end{proof}

\section{Geometric Distribution}
\label{geo:sec}

In the sections above the theatregoers were more or less oblivious
to the seat they selected in that they chose their
seat independently at random with the uniform distribution. A
more realistic assumption might be that theatregoers prefer to be
seated as close to the centre of the action as possible. For a row
in the centre of the theatre, this suggests that there would be
a bias towards the centre seat (or two centre seats in the case of an even
length row) which is nicely modelled by a row with one entrance ending
at the middle of the row
where the probability of choosing a seat is biased towards the centre seat (which
we consider to be a barrier, i.e., people never go past the centre if they enter
on a given side of a two sided row).  
For a row towards the edge of the theatre this would imply that
theatregoers prefer to chose their seats as close to the aisle, i.e.,
as close to the entrance, as possible. This is nicely modelled by
a row with one entrance with a bias towards the entrance. 

As usual, we consider a row with one entrance with $n$ seats
(depicted in Figure~\ref{fig:th1} as squares) 
numbered $1, 2, \ldots n$ from left to right.
We
refer to a distribution modelling the first case, with bias away from the entrance, as a
distribution with a {\em right} bias, while in the second case, with bias towards
the entrance, as distribution with a {\em left} bias. (We only consider cases where
the bias is monotonic in one direction though one could consider more 
complicated distributions if for example there are obstructions part of the way 
along the row.)

A very strong bias towards the centre might be modelled by the geometric distribution. 
For the case of a left biased distribution theatregoers will occupy seat $k$ with probability $\frac{1}{2^k}$
for $k=1, \ldots, n-1$ and with probability $\frac{1}{2^{n-1}}$ for $k=n$. 
For the case of a right biased distribution theatregoers will occupy seat
$k$ with probability $\frac{1}{2^{n+1-k}}$ for $k= 2, \ldots, n$ and with probability $\frac{1}{2^{n-1}}$ for $k=1$.
We examine the occupancy of a one-entrance row under each of these distributions assuming 
a $p$-courteous audience.


\begin{theorem}[Left bias]
\label{thm1geol}
The expected number of occupied seats by $p$-courteous 
theatregoers in an arrangement of $n$ seats
in a row with single entrance is 
\begin{equation}
\label{geo1:eq}
\sum_{l=1}^n \prod_{k=1}^{l-1} \left( p + \frac{1-p}{2^{k}} \right) 
\end{equation}
In particular, the value $T_p$ of ~\eqref{geo1:eq} as $n\rightarrow \infty$, satisfies 
$$
\frac{1.6396 -0.6425 p}{1-p}
\leq T_p \leq 
\frac{1.7096 -0.6425 p}{1-p}
$$
for all $p<1$.
\end{theorem}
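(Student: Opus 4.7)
The plan is to derive a recurrence for $E_n$ by conditioning on the seat and courtesy of the first theatregoer, solve it to obtain the closed-form expression, and then bound the limit $T_p = \lim_{n\to\infty}E_n$ by splitting its series representation at a suitable cutoff.

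First, I would condition on the seat $k$ at which the first theatregoer sits; under the left-biased geometric distribution this happens with probability $1/2^k$ for $k<n$ and with probability $1/2^{n-1}$ for $k=n$. The key structural claim is that after the first theatregoer sits, the sub-problem on the remaining seats is itself an instance of the same left-biased geometric occupancy problem on a shorter row: if the first theatregoer is courteous, or sits at seat $n$ (whose courtesy is irrelevant since no seat lies to its right), that seat becomes transparent to future arrivals and the sub-problem reduces to a row of length $n-1$; if the first theatregoer is selfish and sits at some $k<n$, then seats $k+1,\ldots,n$ are blocked forever and the sub-problem reduces to a row of length $k-1$. Granting this reduction, the recurrence
\begin{equation*}
E_n \;=\; 1 + \sum_{k=1}^{n-1}\frac{1}{2^k}\bigl[pE_{n-1}+(1-p)E_{k-1}\bigr] + \frac{1}{2^{n-1}}E_{n-1}
\end{equation*}
follows immediately.

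I would then simplify this recurrence. Writing $a_k := p + (1-p)/2^k$ and subtracting the analogous recurrence for $E_{n-1}$ from that for $E_n$, the sum involving $E_{k-1}/2^k$ cancels telescopically and leaves
\begin{equation*}
E_n - E_{n-1} \;=\; a_{n-1}\,(E_{n-1} - E_{n-2}).
\end{equation*}
Setting $D_n := E_n - E_{n-1}$, this says $D_n = a_{n-1}D_{n-1}$ with $D_1 = 1$, hence $D_n = \prod_{k=1}^{n-1} a_k$ and $E_n = \sum_{l=1}^n D_l$ yields the closed form claimed in the theorem.

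For the bounds on $T_p$ I would split the infinite series at a cutoff $L$ as
\begin{equation*}
T_p \;=\; \sum_{l=1}^{L+1} D_l \;+\; D_{L+1}\cdot\bigl(a_{L+1} + a_{L+1}a_{L+2} + \cdots\bigr),
\end{equation*}
and exploit the sandwich $p \le a_k \le p + (1-p)/2^{L+1}$, valid for every $k \ge L+1$, to bound the tail factor between $p/(1-p)$ and $\frac{p+(1-p)/2^{L+1}}{(1-p)(1-1/2^{L+1})}$. For a modest cutoff such as $L=3$ this produces bounds whose numerators are polynomials of low degree in $p$; using $p^j \le p$ for $j\ge 1$ and $p\in[0,1]$ to linearize the numerators yields bounds of the announced form $(C_1 - C_2 p)/(1-p)$, and the specific numerical constants follow by direct evaluation. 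The main obstacle of the whole argument is the sub-problem reduction in step one: one must verify carefully that the geometric distribution (and in particular the ``overflow'' probability $1/2^{n-1}$ carried by the last seat) is inherited by the empty seats after one seat has been occupied or has become a blocker. Once that reduction is granted, the remaining algebra and tail estimate are routine.
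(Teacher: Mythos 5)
Your derivation of the closed form is correct and is essentially identical to the paper's: the same recurrence (your form, once the sum $\sum_{k=1}^{n-1}2^{-k}$ is collected, is exactly the paper's $L_n = 1 + pL_{n-1} + (1-p)\sum_{k=1}^{n-1}2^{-k}L_{k-1} + (1-p)2^{-(n-1)}L_{n-1}$), the same subtraction of consecutive instances, and the same telescoping of $D_n = a_{n-1}D_{n-1}$. The structural reduction you flag as the main obstacle is the same modeling assumption the paper makes without further ado, so no issue there.

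The gap is in the bounds on $T_p$. The paper does \emph{not} prove them by truncation: it fits the line $g(p) = 1.6746 - 0.6425p$ to numerically computed values of $(1-p)T_p$ and verifies graphically that $|g(p)-(1-p)T_p|\le 0.035$, which is where the constants $1.6746 \pm 0.035$ come from. Your truncation-plus-tail-sandwich is a legitimate (and arguably more rigorous) alternative, but with cutoff $L=3$ it does not deliver the stated upper bound: carrying out your plan gives $(1-p)T_p \le \frac{1576 - 424p - 104p^2 - 24p^3}{960}$, which at $p=0.5$ equals $1.390625$, whereas the claimed bound gives $1.7096 - 0.6425\cdot 0.5 = 1.38835$ (the true value is about $1.3842$, so the target leaves you less than $0.005$ of slack and $L=3$ overshoots it). The proposed linearization via $p^j \le p$ makes things strictly worse: dropping the negative quadratic and cubic terms yields numerator $1.6417 - 0.4417p$, which at $p=1$ equals $1.2$ while the claimed numerator there is $1.0671$, so the linearized bound cannot imply the theorem's bound near $p=1$. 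Even $L=4$ fails by a hair near $p=0.6$. The approach can be salvaged by taking $L\ge 5$ or $6$ and then verifying a degree-$(L+1)$ polynomial inequality on $[0,1]$ (e.g., by Sturm sequences or interval subdivision), but that verification is the real content of the bound and is precisely the step your proposal leaves unaddressed; the lower bound, by contrast, does go through already at $L=3$.
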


\begin{proof} ({\bf Theorem~\ref{thm1geol}})
In the geometric distribution with left bias a theatergoer
occupies seat numbered $k$ with probability
$2^{-k}$, for $k \leq n-1$ and seat numbered $n$
with probability $2^{-(n-1)}$. 
The seat occupancy
recurrence for courteous theatergoers is the following
\begin{eqnarray}
L_n &=& \label{lgeo1:eq}
1 + p L_{n-1} + (1-p) \sum_{k=1}^{n-1} 2^{-k}L_{k-1} + (1-p)2^{-(n-1)}L_{n-1}
\end{eqnarray}
with initial condition $L_0=0, L_1 = 1$.
To solve this recurrence we consider the expression for $L_{n-1}$
\begin{eqnarray}
L_{n-1} &=& \label{lgeo2:eq}
1 + p L_{n-2} + (1-p) \sum_{k=1}^{n-2} 2^{-k}L_{k-1} + (1-p)2^{-(n-2)}L_{n-2}
\end{eqnarray}
Subtracting Equation~\eqref{lgeo2:eq} from Equation~\eqref{lgeo1:eq}
and using the notation $\Delta_k := L_k - L_{k-1}$ we see that
\begin{eqnarray}
\Delta_n &=& \notag
\left( p + \frac{1-p}{2^{n-1}} \right) \Delta_{n-1} , 
\end{eqnarray}
for $n \geq 2$.
It follows that
\begin{eqnarray}
\Delta_n &=& \notag
\prod_{k=1}^{n-1} \left( p + \frac{1-p}{2^{k}} \right)  , 
\end{eqnarray}
which proves Identity~\eqref{geo1:eq}.

The previous identity implies that
$\Delta_n \leq \left(\frac{1+p}{2}\right)^{n-1}$ and therefore
we can get easily an upper bound on the magnitude of
$L_n$. Indeed,
\begin{eqnarray}
L_n &=& \notag
\sum_{k=1}^n \Delta_k 
\leq
\sum_{k=1}^{n-1} \left( \frac{1+p}{2} \right)^{k-1}
\leq
\frac{2}{1-p} ,
\end{eqnarray}
for $p<1$. Similarly, one can easily show a lower bound of $1/(1-p)$. Next we focus on showing the much tighter bounds we have already promised. 

Our goal is to provide good estimates of $T_p = \sum_{l=1}^\infty \prod_{k=1}^{l-1} \left( p + \frac{1-p}{2^{k}} \right)$. Although there seems to be no easy closed formula that describes $T_p$, the same quantity can be numerically evaluated for every fixed value of $p<1$ using any mathematical software that performs symbolic calculations. In particular we can draw $T_p$ for all non negative values of $p<1$. 

One strategy to approximate $T_p$ to a good precision would be to compute enough points $(p,T_p)$, and then find an interpolating polynomial. Since we know $T_p$ is unbounded as $p\rightarrow 1^-$, it seems more convenient to find interpolating points $(p,(1-p)T_p)$ instead (after all, we know that $1/(1-p) \leq T_p \leq 2/(1-p)$). Adding at the end a sufficient error constant term, we can find polynomials that actually bound from below and above expression $(1-p)T_p$. 

It turns out that just a few interpolating points are enough to provide a good enough estimate. In that direction, we define polynomial 
$$
g(p) :=1.6746 -0.6425 p
$$
which we would like to show that approximates $(1-p)T_p$ sufficiently well. 
\begin{figure}[!htb]
\begin{center}
\includegraphics[width=8cm]{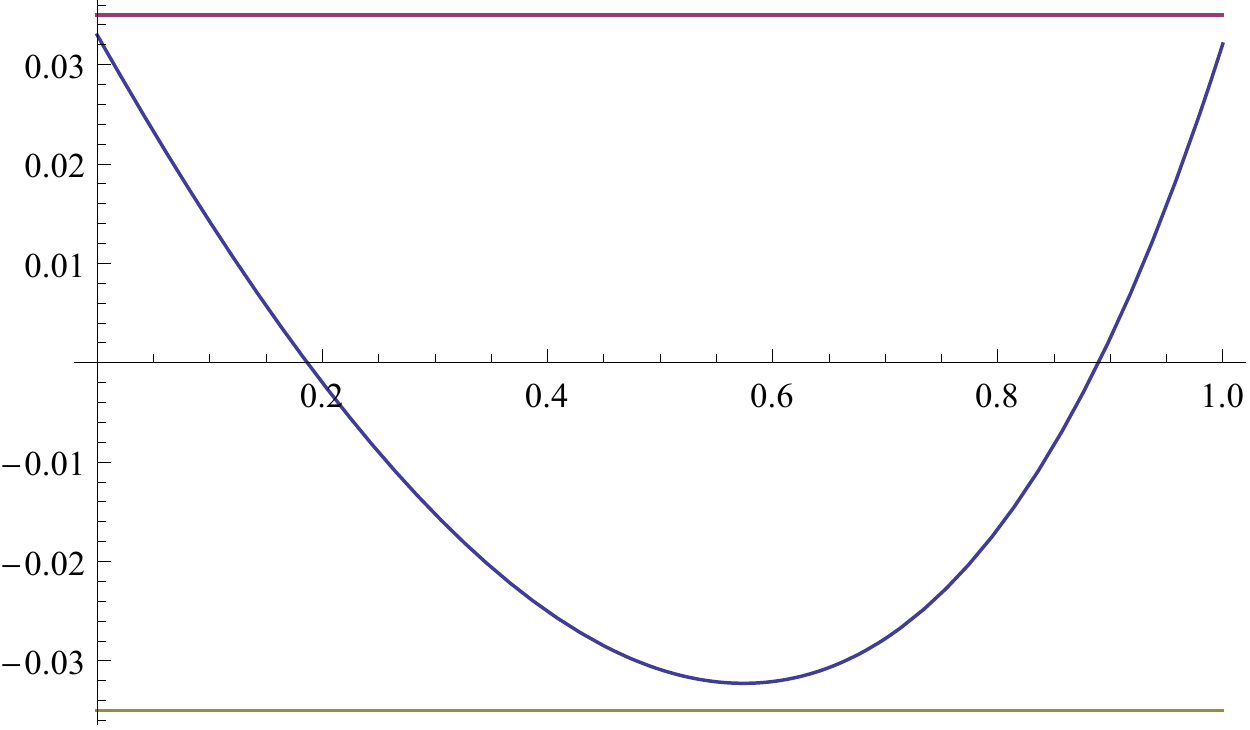}
\caption{The graph of $g(p) - (1-p)T_p$ together with the bounds $\pm0.035$.}
\label{fig: interpolation}
\end{center}
\end{figure}
To that end, we can draw $g(p) - (1-p)T_p$, see Figure~\ref{fig: interpolation}, and verify that indeed $\left| g(p) - (1-p)T_p \right|\leq 0.035$ as promised. 
\qed
\end{proof}

We leave it as an open problem to determine the exact asymptotics of expression~\eqref{geo1:eq} above, as a function of $p$. As a sanity check, we can find (using any mathematical software that performs symbolic calculations) the limit of ~\eqref{geo1:eq} as $n\rightarrow \infty$ when $p=0$, which turns out to be approximately $1.64163$. 


\begin{theorem}[Right bias]
\label{thm1geor}
The expected number of occupied seats by $p$-courteous 
theatregoers in an arrangement of $n$ seats
in a row with single entrance is at least $\frac{n+1}{2}$, 
for any $p$. Moreover, this bound is attained for $p=0$.
\end{theorem}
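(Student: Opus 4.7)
The plan is to mirror the recurrence approach used for Theorem~\ref{thm1geol}. Let $R_n$ denote the expected number of occupied seats in a single-entrance row of $n$ seats under the right-biased geometric distribution with a $p$-courteous audience. Conditioning on whether the first theatregoer is courteous (probability $p$, in which case the first person's choice is irrelevant and the problem reduces to size $n-1$) or selfish (probability $1-p$, in which case sitting at seat $k$ blocks seats $k+1,\ldots,n$ and leaves an accessible sub-row of $k-1$ seats with the same right-biased distribution), and using that seat $k\geq 2$ is chosen with probability $2^{-(n+1-k)}$ while seat $1$ is chosen with probability $2^{-(n-1)}$, I obtain
\begin{equation*}
R_n \;=\; 1 + p\,R_{n-1} + (1-p)\Bigl[2^{-(n-1)}R_0 + \sum_{k=2}^{n} 2^{-(n+1-k)}\,R_{k-1}\Bigr],
\end{equation*}
with $R_0=0$ and $R_1=1$.

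Next I would settle the case $p=0$ by induction on $n$, showing $R_n = (n+1)/2$ exactly. The inductive step reduces to computing $\sum_{k=2}^{n} 2^{-(n+1-k)}(k/2)$, which, after the substitution $j=n+1-k$, becomes $\tfrac{1}{2}\bigl[(n+1)(1-2^{-(n-1)}) - \sum_{j=1}^{n-1} j\,2^{-j}\bigr]$. Using the standard identity $\sum_{j=1}^{n-1} j\,2^{-j} = 2 - (n+1)\,2^{-(n-1)}$, this simplifies to $(n-1)/2$, so $R_n = 1 + (n-1)/2 = (n+1)/2$. This proves the ``attained at $p=0$'' half of the theorem.

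For the lower bound at arbitrary $p\in[0,1]$, I would run the same induction with hypothesis $R_m \geq (m+1)/2$ for all $m<n$. The courteous term contributes $p\,R_{n-1} \geq pn/2$, while the selfish-term computation just done yields $(1-p)\sum_{k=2}^n 2^{-(n+1-k)} R_{k-1} \geq (1-p)(n-1)/2$ (the $R_0$ summand being zero). Adding the deterministic $1$, one concludes
\begin{equation*}
R_n \;\geq\; 1 + \frac{pn + (1-p)(n-1)}{2} \;=\; \frac{n+1+p}{2} \;\geq\; \frac{n+1}{2},
\end{equation*}
which is exactly the bound claimed, with equality precisely when $p=0$.

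The main technical ingredient is the closed-form evaluation of the partial sum $\sum_{j=1}^{n-1} j\,2^{-j}$; this is standard, but it is the step that both produces the clean exact answer at $p=0$ and, combined with the structure of the recurrence, forces the extremal case of the inequality to occur there. Beyond that, the only point requiring care is the boundary mass at seat $1$: it carries probability $2^{-(n-1)}$ rather than $2^{-n}$, so that the $R_0$ term enters with the right weight and the total probability sums to one.
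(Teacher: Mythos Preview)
Your proof is correct. The recurrence you write down is the same one the paper uses (up to a typographical sign in the exponent there), and your strong induction goes through cleanly; the identity $\sum_{j=1}^{n-1} j\,2^{-j}=2-(n+1)2^{-(n-1)}$ is exactly what makes the selfish sum collapse to $(n-1)/2$, and the general case then follows as you say, even yielding the slightly sharper $R_n\geq (n+1+p)/2$.

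The paper takes a different route. Instead of inducting on $R_n$ directly, it subtracts the recurrence at $n$ from the one at $n-1$ to obtain a recurrence for the increments $\Delta_n:=R_n-R_{n-1}$, namely
\[
\Delta_n \;=\; p\,\Delta_{n-1} + (1-p)\sum_{k=1}^{n-2}2^{-k}\Delta_{n-k} + \frac{1-p}{2^{n-1}},
\]
and then shows by a one-line induction that $\Delta_n\geq\tfrac12$ for all $n\geq 1$, with equality for every $n\geq 2$ when $p=0$. Summing $\Delta_k$ recovers $R_n\geq(n+1)/2$. The advantage of the paper's approach is that it sidesteps the explicit arithmetic--geometric sum entirely: once the $\Delta$-recurrence is in hand, the geometric weights sum telescopically and the inequality is immediate. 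Your approach is more direct and transparent about where the $(n+1)/2$ comes from, at the cost of that one standard identity. Both arguments are short; they simply organise the same information differently.
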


\begin{proof} ({\bf Theorem~\ref{thm1geor}})
In the geometric distribution with left bias a theatergoer
occupies seat numbered $k$ with probability
$2^{k-n-1}$, for $k \geq 2$ and seat numbered $1$
with probability $2^{-(n-1)}$.
The seat occupancy
recurrence for courteous theatergoers is the following
\begin{eqnarray}
R_n &=& \label{rgeo1:eq}
1 + p R_{n-1} + (1-p) \sum_{k=2}^{n} 2^{n+1-k}R_{k-1} .
\end{eqnarray}
with initial condition $R_0=0, R_1 = 1$.
To solve this recurrence, we
consider as usual the equation for $R_{n-1}$
\begin{eqnarray}
R_{n-1} &=& \label{rgeo2:eq}
1 + p R_{n-2} + (1-p) \sum_{k=2}^{n-1} 2^{n-k}R_{k-1} .
\end{eqnarray}
Subtracting Equation~\eqref{rgeo2:eq} from Equation~\eqref{rgeo1:eq}
and using the notation $\Delta_k := R_k - R_{k-1}$ we see that
\begin{eqnarray}
\Delta_n &=& \label{rgeo3:eq}
p \Delta_{n-1} + (1-p) \sum_{k=1}^{n-2} \frac{1}{2^k} \Delta_{n-k} + \frac{1-p}{2^{n-1}}, 
\end{eqnarray}
for $n \geq 2$. We claim that we
can use Equation~\eqref{rgeo3:eq} to prove that
$\Delta_k \geq \frac{1}{2}$ by induction on $k$, for all
$k \geq 1$. 
Observe $\Delta_1 = 1$.
Assume the claim is valid for all $1 \leq k \leq n-1$. Then we see that
\begin{eqnarray}
\Delta_n 
&\geq& \notag
\frac{p}{2} +
\frac{1-p}{2} \left( 1- \frac{1}{2^{n-2}} \right) + \frac{1-p}{2^{n-1}}
=
\frac{1}{2},
\end{eqnarray}
which proves the claim.

It is easy to see that this same proof can be used to
show that $\Delta_n=\frac1 2$, for all $n \geq 2$, in the
case $p=0$. This proves the theorem.
\qed
\end{proof}

\section{Zipf Distribution}
\label{zipf:sec}

We now study the case where theatregoers
select their seat 
using an arguably more natural  distribution, namely, the Zipf distribution \cite{zipf}.
As before, throughout the presentation we
consider an arrangement of $n$ seats 
(depicted in Figure~\ref{fig:th1} as squares) 
numbered $1$ to $n$ from left to right with
one entrance starting from seat $1$. 
Theatregoers enter in sequentially 
and may enter the row only from the single entrance.
There are two occupancy possibilities: {\em Zipf with left bias} and
{\em Zipf with right bias}.
In Zipf with left bias (respectively, right) a 
theatregoer will occupy 
seat $k$ at random with probability $\frac{1}{kH_n}$
(respectively, $\frac{1}{(n+1-k)H_n}$)
and a selfish theatregoer
blocks passage 
to her/his right, i.e., all positions in 
$[k+1,n]$. 
In the sequel we look at a row with a single
entrance. The case of a row with two entrances
may be analyzed in a similar manner.


First we analyze the Zipf distribution 
with left bias for selfish theatregoers. 

\begin{theorem}[Selfish with left bias]
\label{thm1zipfl}
The expected number of occupied seats by selfish theatregoers
in an arrangement of $n$ seats
in a row with single entrance is equal to $\ln \ln n$, asymptotically in $n$.
\end{theorem}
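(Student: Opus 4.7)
The plan is to first write down the natural recurrence for $Z_n$, the expected number of occupied seats in a one-entrance row of $n$ seats under the Zipf left-bias distribution, then manipulate it so that a clean first-order recurrence emerges, and finally extract the asymptotics via a standard integral comparison. Since a selfish theatregoer who lands in seat $k$ blocks every seat to her right, the sub-row $[1,k-1]$ becomes an independent one-entrance problem on $k-1$ seats, and the Zipf weights on $\{1,\dots,k-1\}$ renormalize to exactly the Zipf distribution on $k-1$ points. This yields
\[
Z_n \;=\; 1 \;+\; \frac{1}{H_n}\sum_{k=1}^{n}\frac{Z_{k-1}}{k},
\]
with initial conditions $Z_0 = 0$ and $Z_1 = 1$.

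The next step is to multiply both sides by $H_n$, write the analogous identity with $n$ replaced by $n-1$ (multiplied by $H_{n-1}$), and subtract. Since $H_n - H_{n-1} = 1/n$, every term of the inner sum except the $k=n$ contribution cancels, leaving $H_n Z_n - H_{n-1} Z_{n-1} = (1 + Z_{n-1})/n$. I would then split the left-hand side as $H_n Z_n - H_{n-1} Z_{n-1} = H_n(Z_n - Z_{n-1}) + (H_n - H_{n-1}) Z_{n-1} = H_n(Z_n - Z_{n-1}) + Z_{n-1}/n$, at which point the $Z_{n-1}/n$ terms on each side cancel exactly. What remains is the remarkably clean first-order relation
\[
Z_n - Z_{n-1} \;=\; \frac{1}{n H_n},
\]
which telescopes (using $Z_1 = 1$) to the closed form $Z_n = 1 + \sum_{k=2}^{n} 1/(k H_k)$.

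For the asymptotics, I would invoke the standard estimate $H_k = \ln k + \Theta(1)$ to sandwich this sum between constant multiples of $\sum_{k=2}^{n} 1/(k \ln k)$. An integral test using the antiderivative $\int dx/(x \ln x) = \ln\ln x$ gives
\[
\sum_{k=2}^{n}\frac{1}{k\ln k} \;=\; \int_{2}^{n}\frac{dx}{x\ln x} + O(1) \;=\; \ln\ln n + O(1),
\]
from which $Z_n = \ln\ln n + O(1)$ follows, and the theorem is proved.

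The only real obstacle I foresee is spotting the cancellation in the subtraction step: without noticing that the $Z_{n-1}/n$ produced by expanding $(H_n-H_{n-1})Z_{n-1}$ matches the $Z_{n-1}/n$ on the right-hand side, one is stuck with a clumsy recurrence in which the logarithmic factor $H_n$ is entangled with $Z_n$ and whose $\ln\ln n$ behaviour is not transparent. Once the cancellation is identified, the rest of the proof collapses to a single telescoping sum followed by a textbook integral asymptotic.
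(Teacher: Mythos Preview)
Your proof is correct and follows essentially the same route as the paper's: the same recurrence $L_n = 1 + \frac{1}{H_n}\sum_{k=1}^{n}\frac{1}{k}L_{k-1}$, the same multiply-by-$H_n$-and-subtract manoeuvre leading to $L_n - L_{n-1} = \frac{1}{nH_n}$, and the same integral comparison $\sum_k 1/(kH_k) \sim \int dx/(x\ln x) = \ln\ln n$. If anything, your write-up is slightly more careful than the paper's, since you explicitly note that the Zipf weights on $\{1,\dots,k-1\}$ renormalize to the Zipf distribution on $k-1$ points, which is what makes the recursion self-similar.
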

\begin{proof} ({\bf Theorem~\ref{thm1zipfl}})
Let $L_n$ be the expected number of theatregoers occupying seats 
in a row of $n$ seats.
Observe that $L_0=0, L_1 =1$ and that the following recurrence is valid
for all $n \geq 1$.
\begin{eqnarray}
L_n &=& \label{maineq1z}
= 1 + \frac{1}{H_n} \sum_{k=1}^{n} \frac{1}{k} L_{k-1}.
\end{eqnarray}
The explanation for this equation is as follows. A theatregoer
may occupy any one of the seats from $1$ to $n$. If it
occupies seat number $k$ then seats numbered $k+1$ to $n$
are blocked while only seats numbered $1$ to $k-1$ may be
occupied by new theatregoers. 

It is not difficult to solve this recurrence. Write down
both recurrences for $L_n$ and $L_{n-1}$.
\begin{eqnarray*}
H_n L_n 
&=& 
H_n + \sum_{k=1}^{n} \frac{1}{k} L_{k-1} 
\mbox{ and }
H_{n-1} L_{n-1} 
= H_{n-1} + \sum_{k=1}^{n-1} \frac{1}{k} L_{k-1}.
\end{eqnarray*}
Substracting these last two identities we see that
$$
H_nL_n - H_{n-1} L_{n-1} = H_n - H_{n-1} + \frac{1}{n} L_{n-1}
= \frac{1}{n} + \frac{1}{n} L_{n-1}
$$
Therefore $H_n L_n = \frac{1}{n} + H_n L_{n-1}$.
Consequently,
$
L_n = \frac{1}{nH_n} + L_{n-1}.
$
From the last equation we see that
$$
L_n = \sum_{k=2}^n \frac{1}{kH_k} \approx \int_2^n \frac{dx}{x \ln x}  = \ln \ln n .
$$
This yields easily Theorem~\ref{thm1zipfl}.
\qed
\end{proof}


Next we consider  selfish theatregoers choosing their seats according to 
the Zipf distribution 
with right bias. As it turns out, the analysis of the resulting recurrence is
more difficult than the previous cases. First we need the following technical lemma whose proof can be found in the Appendix.
\begin{lemma}\label{lem: both bounds on summation}
For every $\epsilon>0$, there exists $n_0$ big enough such that
$$
\left| \frac{\pi^2}6 - \sum_{k=1}^{n-1}\frac{H_n-H_k}{n-k} \right| \leq \epsilon, \quad \forall n\geq n_0
$$
In particular, for all $n \geq 40$ we have 
$$
1.408 \leq \sum_{k=1}^{n-1}\frac{H_n-H_k}{n-k} \leq 1.86.
$$
\end{lemma}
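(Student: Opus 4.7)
The plan is to identify $S_n := \sum_{k=1}^{n-1}\frac{H_n-H_k}{n-k}$ as a Riemann-sum approximation to $\int_0^1 \frac{-\ln x}{1-x}\,dx = \frac{\pi^2}{6}$. The bridge is the elementary sandwich $\ln\frac{n+1}{k+1} \le H_n - H_k \le \ln\frac{n}{k}$, obtained by comparing $\sum_{j=k+1}^n 1/j$ to $\int_{k+1}^{n+1}dx/x$ and $\int_k^n dx/x$. Substituting each inequality into $S_n$ and relabelling (with $N=n$ on the upper side and $N=n+1$ on the lower) expresses both bounds in the common form $\frac{1}{N}\sum_{k=1}^{N-1} f(k/N)$ with $f(x):=\frac{-\ln x}{1-x}$, up to a single boundary term $\ln N/(N-1)$ on the lower side.

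The analytic core is then a Riemann comparison for $f$. I would first record that $f$ is positive and continuous on $(0,1]$ with $\lim_{x\to 1^-}f(x)=1$ (by L'H\^opital) and an integrable singularity $f(x)\sim -\ln x$ at $0$, and is strictly decreasing on $(0,1)$: setting $g(x):=1-1/x-\ln x$, one checks $g(1)=0$ and $g'(x)=(1-x)/x^2>0$, hence $g<0$ on $(0,1)$ and $f'=g/(1-x)^2<0$. Monotonicity gives the two-sided comparison
\[
\int_{1/N}^1 f(x)\,dx \;\le\; \frac{1}{N}\sum_{k=1}^{N-1}f(k/N) \;\le\; \int_0^{(N-1)/N} f(x)\,dx \;\le\; \int_0^1 f(x)\,dx,
\]
and the tail bound $\int_0^{1/N} f \le \frac{\ln N+1}{N-1}$ (using $1/(1-x)\le N/(N-1)$ on $[0,1/N]$) controls the loss at the singular endpoint. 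Evaluating the limiting integral by expanding $\frac{1}{1-x}=\sum_{j\ge 0} x^j$, applying monotone convergence, and using $\int_0^1(-\ln x)\,x^j\,dx = 1/(j+1)^2$ recovers $\sum_{j\ge 1} 1/j^2 = \pi^2/6$. Combining these ingredients with the two sandwich inequalities yields the asymptotic claim: for every $\varepsilon>0$ there is $n_0$ with $|S_n-\pi^2/6|\le\varepsilon$ for all $n\ge n_0$.

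For the concrete bounds $1.408\le S_n\le 1.86$ at $n\ge 40$, the upper half of the sandwich gives $S_n\le \frac{1}{n}\sum_{k=1}^{n-1}f(k/n)\le \int_0^1 f = \pi^2/6 < 1.645 < 1.86$ for free, and the lower half (with $N=n+1$) combined with the Riemann comparison gives the explicit estimate $S_n \ge \pi^2/6 - (2\ln(n+1)+1)/n$. A short monotonicity check shows the right-hand side is increasing in $n\ge 1$ and evaluates to roughly $1.434$ at $n=40$, comfortably above $1.408$. The main obstacle is that the margin on the lower side is slim (about $0.03$), so the sandwich must be applied in its sharper form $\ln\frac{n+1}{k+1}$ rather than the looser $\ln n - \ln k$; using the latter would force a direct numerical check over a range of small $n$ before invoking the asymptotics.
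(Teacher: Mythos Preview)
Your argument is correct. Both your proof and the paper's hinge on the same two-step integral comparison (once for $H_n-H_k=\sum_{j=k+1}^n 1/j$ and once for the outer sum in $k$), but you carry it out in a different order and a different language, which actually buys you cleaner explicit constants. The paper first replaces the outer sum by an integral $\int_1^{n-1}\frac{H_n-H_k}{n-k}\,dk$ using monotonicity in $k$, then replaces $H_n-H_k$ by $\int 1/t\,dt$ inside; the resulting double integral evaluates to a combination of dilogarithms $\Li_2(\cdot)$, and the limit $\Li_2(1)=\pi^2/6$ gives the asymptotic, with the numerical bounds $1.408$ and $1.86$ read off from the dilogarithm expressions at $n=40$. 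You instead first bound $H_n-H_k$ by $\ln(n/k)$ (above) and $\ln\frac{n+1}{k+1}$ (below), which turns $S_n$ directly into Riemann sums of the single function $f(x)=\frac{-\ln x}{1-x}$ on $[0,1]$; monotonicity of $f$ then gives $S_n\le \int_0^1 f=\pi^2/6$ outright for every $n$, and an explicit lower bound $\pi^2/6-(2\ln(n+1)+1)/n$ without any special functions. The payoff is that your upper bound $\pi^2/6\approx 1.6449$ is already sharper than the paper's $1.86$, and your lower bound at $n=40$ (${\approx}1.434$) comfortably clears $1.408$; the cost is that your derivation of the integral value via the power-series expansion is essentially the same computation as $\Li_2(1)$ in disguise, so the two arguments are close cousins rather than truly independent.
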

Next we use Lemma~\ref{lem: both bounds on summation} to conclude that
\begin{lemma}
\label{costis2}
The solution of the recurrence relation 
$$R_n = 1 + \frac1{H_n}\sum_{k=1}^{n-1}\frac{1}{n-k}R_k$$
with initial condition $R_1=1$ satisfies 
\begin{equation}\label{equa: both bounds}
\frac{100}{383} H^2_n \leq R_n \leq  \frac{5}{7} H^2_n.
\end{equation}
\end{lemma}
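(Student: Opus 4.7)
The plan is to prove both inequalities simultaneously by strong induction on $n$, using Lemma~\ref{lem: both bounds on summation} to control the dominant contribution of the recurrence, with all $n$ below some explicit threshold $n_0$ handled as base cases by direct numerical evaluation of the recurrence.

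The algebraic heart of the inductive step is the expansion $H_k^2 = H_n^2 - 2H_n(H_n-H_k) + (H_n-H_k)^2$, which splits the key sum as
$$\sum_{k=1}^{n-1}\frac{H_k^2}{n-k} \;=\; H_n^2\,H_{n-1} \;-\; 2H_n\,S_1 \;+\; S_2,$$
where $S_1 := \sum_{k=1}^{n-1}\frac{H_n - H_k}{n-k}$ is exactly the quantity bounded by Lemma~\ref{lem: both bounds on summation} and $S_2 := \sum_{k=1}^{n-1}\frac{(H_n-H_k)^2}{n-k}$. Before running the induction I would first establish that $S_2$ is bounded by an absolute constant independent of $n$, which follows by splitting the range at $k=n/2$ and using the crude bounds $H_n - H_k \leq \ln(n/k)$ on one half and $H_n - H_k \leq (n-k)/k$ on the other.

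For the upper bound, assume inductively that $R_k \leq \frac{5}{7}H_k^2$ for all $k<n$, substitute into the recurrence, and use $H_nH_{n-1}=H_n^2-H_n/n$ together with $S_1 \geq 1.408$ (from the lemma, valid for $n \geq 40$) to obtain
$$R_n \;\leq\; \frac{5}{7}H_n^2 \;+\; \Bigl(1 - \tfrac{10}{7}\cdot 1.408\Bigr) \;-\; \frac{5H_n}{7n} \;+\; \frac{5\,S_2}{7H_n}.$$
The bracketed constant is approximately $-1.01$, so once $n$ is large enough for the $O(1/H_n)$ error to be absorbed, the induction closes. The lower bound is symmetric: assume $R_k \geq \frac{100}{383}H_k^2$, use $S_1 \leq 1.86$ and drop the non-negative $S_2$, and the analogous residual constant is $1 - \frac{200}{383}\cdot 1.86 \approx 0.029 > 0$, which is just enough to absorb the negative correction $-(100/383)H_n/n$ provided $H_n/n \lesssim 0.11$, i.e.\ precisely for $n$ at least around $40$.

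The main obstacle is that the constants $5/7$ and $100/383$ are deliberately chosen very close to the asymptotic breakeven points $1/(2\cdot 1.408)$ and $1/(2\cdot 1.86)$ predicted by the lemma, so each induction closes with only a small constant of slack. Making the argument rigorous therefore demands (i) an honest absolute constant upper bound on $S_2$ (not merely $S_2 = o(H_n)$, which would not suffice) and (ii) checking enough base cases numerically to absorb the lower-order errors at the threshold $n_0$ where the lemma first applies.
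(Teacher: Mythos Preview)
Your proposal is correct and follows the same overall strategy as the paper: strong induction on $n$ with numerical verification of base cases up to $n_0=40$, using Lemma~\ref{lem: both bounds on summation} to control the main sum. The one substantive difference is in the algebra for the upper bound. You expand $H_k^2 = H_n^2 - 2H_n(H_n-H_k) + (H_n-H_k)^2$ and must then separately establish an absolute-constant bound on $S_2=\sum_{k=1}^{n-1}(H_n-H_k)^2/(n-k)$, which you correctly flag as one of the main obstacles. The paper instead writes $H_n^2-H_k^2=(H_n+H_k)(H_n-H_k)$ and uses the trivial sandwich $H_n\le H_n+H_k\le 2H_n$; this immediately places $\frac{1}{H_n}\sum_{k=1}^{n-1}\frac{H_n^2-H_k^2}{n-k}$ between $S_1$ and $2S_1$, so Lemma~\ref{lem: both bounds on summation} applies directly on both sides and $S_2$ never enters. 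Your lower-bound computation (drop $S_2\ge 0$, threshold $H_n/n\lesssim 0.11$ at $n\ge 40$) coincides exactly with the paper's; your upper-bound argument is equally valid but does a bit more work than necessary.
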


\begin{proof} ({\bf Lemma~\ref{costis2}})
It is easy to check numerically that for $n_0=40$ we have
$$
\frac{R_{n_0}}{H^2_{n_0}} \approx 0.430593
$$
and indeed $\frac{100}{383} \leq 0.430593 \leq \frac{5}{7}$.

Hence, the promised bounds follow inductively on $n\geq n_0$, once we prove that for the constants $c'=\frac57,c''=\frac5{19}$ and that for all $n\geq n_0$ we have
\begin{align*}
1 + \frac {c'}{H_n}\sum_{k=1}^{n-1}\frac{1}{n-k}H^2_k 
&\leq {c'} H^2_n \\
1 + \frac{c''}{H_n}\sum_{k=1}^{n-1}\frac{1}{n-k}H^2_k 
&\geq c'' H^2_n \\
\end{align*}

To save repetitions in calculations, let $\Box \in \{\leq, \geq\}$ and $c \in \{c',c''\}$, and observe that 
\begin{align}
1 + \frac {c}{H_n}\sum_{k=1}^{n-1}\frac{1}{n-k}H^2_k ~~\Box~~ c H^2_n 
~~~&\Leftrightarrow ~~~
\frac{H_n}{c} ~~\Box~~ H^3_n -  \sum_{k=1}^{n-1}\frac{1}{n-k}H^2_k \notag \\
~~~&\Leftrightarrow ~~~
\frac{H_n}{c} ~~\Box~~ \sum_{k=1}^{n-1}\frac{H^2_n-H^2_k}{n-k} + \frac{H_n^2}{n} \tag{Since $\sum_{k=0}^{n-1}\frac1{n-k}=H_n$}\\
~~~&\Leftrightarrow ~~~
\frac{1}{c} - \frac{H_n}n ~~\Box~~ \frac1{H_n}\sum_{k=1}^{n-1}\frac{H^2_n-H^2_k}{n-k} \notag \\
~~~&\Leftrightarrow ~~~
\frac{1}{c} - \frac{H_n}n ~~\Box~~ \frac1{H_n}\sum_{k=1}^{n-1}\frac{(H_n+H_k)(H_n-H_k)}{n-k} \label{equa: aimed inequalities} \end{align}

For proving the upper bound of~\eqref{equa: both bounds}, we use $\Box="\leq"$ (note that the direction is inversed). We focus on expression \eqref{equa: aimed inequalities} which we need to show that is satisfied for the given constant. In that direction we have
$$
\frac1{H_n}\sum_{k=1}^{n-1}\frac{(H_n+H_k)(H_n-H_k)}{n-k} 
\geq
\sum_{k=1}^{n-1}\frac{H_n-H_k}{n-k} 
\stackrel{(Lemma~\ref{lem: both bounds on summation})}{\geq} 1.408 \geq \frac75-\frac{H_n}n
$$
Hence, \eqref{equa: aimed inequalities} is indeed satisfied for $c=\frac57$, establishing  the upper bound of~\eqref{equa: both bounds}.

Now for the lower bound of~\eqref{equa: both bounds}, we take $\Box="\geq"$ , and we have 
$$
\frac1{H_n}\sum_{k=1}^{n-1}\frac{(H_n+H_k)(H_n-H_k)}{n-k} 
\leq
2 \sum_{k=1}^{n-1}\frac{H_n-H_k}{n-k} 
\stackrel{(Lemma~\ref{lem: both bounds on summation})}{\leq} 3.72 \leq \frac{383}{100}-\frac{H_n}n
$$
for $n\geq 40$. Hence $c''=\frac{100}{383}$, again as promised. 
\qed
\end{proof}

Note that Lemma~\ref{costis2} 
implies that $\lim_{n\rightarrow \infty} R_n/\ln^2 n =c$, 
for some constant $c \in [0.261, 0.72]$. This is actually the constant hidden in the $\Theta$-notation of Theorem~\ref{thm1zipfr}. We leave it as an open problem to determine exactly the constant $c$. Something worthwhile noticing is that our arguments cannot narrow down the interval of that constant to anything better than $[3/\pi^2, 6/\pi^2]$.

\begin{theorem}[Selfish with right bias]
\label{thm1zipfr}
The expected number of occupied seats by selfish theatregoers
in an arrangement of $n$ seats
in a row with single entrance is $\Theta( \ln^2 n)$, 
asymptotically in $n$.
\end{theorem}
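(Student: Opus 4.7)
My plan is to reduce Theorem~\ref{thm1zipfr} directly to Lemma~\ref{costis2} by deriving the correct occupancy recurrence for selfish theatregoers under a right-biased Zipf distribution. Let $R_n$ denote the expected number of occupied seats in a row of $n$ seats. I will condition on the seat $j \in \{1,\ldots,n\}$ chosen by the first theatregoer: under Zipf with right bias, seat $j$ is selected with probability $\frac{1}{(n+1-j)H_n}$, and selfishness then blocks seats $j+1,\ldots,n$ for all subsequent arrivals. The surviving process proceeds on the subrow $\{1,\ldots,j-1\}$, which inherits a right-biased Zipf distribution on $j-1$ positions (seat $k$ being chosen with probability $\frac{1}{(j-k)H_{j-1}}$, the definition from Section~\ref{zipf:sec} applied to a row of length $j-1$). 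This yields
\[
R_n \;=\; 1 + \sum_{j=1}^n \frac{R_{j-1}}{(n+1-j)\,H_n}, \qquad R_0 = 0,\ R_1 = 1.
\]
Re-indexing with $k = j-1$ and using $R_0 = 0$ to drop the $k=0$ term transforms this into exactly the recurrence analyzed in Lemma~\ref{costis2}, namely
\[
R_n \;=\; 1 + \frac{1}{H_n}\sum_{k=1}^{n-1}\frac{R_k}{n-k}.
\]

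With the recurrence identified, the conclusion follows in one step. Lemma~\ref{costis2} supplies $\frac{100}{383}\,H_n^2 \leq R_n \leq \frac{5}{7}\,H_n^2$ for $n \geq 40$; combined with the standard expansion $H_n = \ln n + \gamma + O(1/n)$, which gives $H_n^2 = \ln^2 n + \Theta(\ln n)$, this immediately yields $R_n = \Theta(\ln^2 n)$, as required.

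Essentially all of the difficulty has been absorbed into Lemmas~\ref{lem: both bounds on summation} and~\ref{costis2}, so for this theorem the only original step is the derivation of the recurrence and the re-indexing. The one pitfall to watch for is the distributional bookkeeping in the conditioning step: one must confirm that after the first theatregoer takes seat $j$, later arrivals sample from a \emph{fresh} right-biased Zipf on the $j-1$ remaining accessible seats (this is how the process is defined, by analogy with the uniform and left-biased cases treated earlier in the paper), and not from the original Zipf on $n$ seats truncated to $\{1,\ldots,j-1\}$. Once that point is pinned down, the theorem reduces to a straightforward translation of Lemma~\ref{costis2} through the asymptotics of $H_n$.
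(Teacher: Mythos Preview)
Your proposal is correct and follows essentially the same approach as the paper: you derive the recurrence $R_n = 1 + \frac{1}{H_n}\sum_{k=1}^{n-1}\frac{R_k}{n-k}$ by conditioning on the first theatregoer's seat and re-indexing, then invoke Lemma~\ref{costis2} and the asymptotics of $H_n$. The paper does exactly this (with slightly less discussion of the distributional bookkeeping you flag), so there is nothing to add.
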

\begin{proof} ({\bf Theorem~\ref{thm1zipfr}})
Let $R_n$ be the expected number of theatregoers occupying seats 
in a row of $n$ seats,
when seating is biased to the right,
Observe that $R_0=0, R_1 =1$ and that the following recurrence is valid
for all $n \geq 1$.
\begin{eqnarray}
R_n 
&=& \label{maineq1zz} 
1 + \frac{1}{H_n} \sum_{k=2}^{n} \frac{1}{n+1-k} R_{k-1}
= 
1 + \frac{1}{H_n} \sum_{k=1}^{n-1} \frac{1}{n-k} R_{k}.
\end{eqnarray}
The justification for the recurrence is the same as in the case of the left bias with
the probability changed to reflect the right bias. The theorem now follows immediately
from Lemma~\ref{costis2}. 
\qed
\end{proof}

%


\begin{theorem}[Courteous with left bias]
\label{thm2zipfl}
The expected number of occupied seats by $p$-courteous theatregoers
in an arrangement of $n$ seats
in a row with single entrance is equal to
\begin{eqnarray}
L_n 
&=& \label{maineq5zipf}
\ln \ln n +
\sum_{l=1}^n
\sum_{k=1}^l p^k 
\left( 1 - h_l \right)
\left( 1 - h_{l-1} \right)
\cdots
\left( 1 - h_{l-k+1} \right)
h_{l-k}
\end{eqnarray}
asymptotically in $n$, where $h_0 := 0$ and 
$h_k := \frac{1}{kH_k}$, for $k \geq 1$. In particular, for constant
$0< p < 1$ we have that $L_n = \Theta (\frac{\ln\ln n}{1-p})$.
\end{theorem}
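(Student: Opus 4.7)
The plan is to follow the template of Theorems~\ref{thm01p} and~\ref{thm1zipfl} adapted to the Zipf left-biased, $p$-courteous setting. First, conditioning on the type of the first entering theatregoer yields the recurrence
\begin{equation*}
L_n = 1 + pL_{n-1} + \frac{1-p}{H_n}\sum_{k=1}^n \frac{L_{k-1}}{k}, \qquad L_0=0,\ L_1=1,
\end{equation*}
since with probability $p$ the first theatregoer is courteous and the problem collapses to one of size $n-1$, whereas with probability $1-p$ she is selfish and occupies seat $k$ with Zipf-left probability $1/(kH_n)$, leaving a usable sub-row of length $k-1$.

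To eliminate the sum I would multiply through by $H_n$, write the analogous identity at $n-1$ scaled by $H_{n-1}$, and subtract. Using $H_n-H_{n-1}=1/n$ together with the bookkeeping identities $H_nL_n-H_{n-1}L_{n-1}=H_n\Delta_n+L_{n-1}/n$ and $H_nL_{n-1}-H_{n-1}L_{n-2}=H_{n-1}\Delta_{n-1}+L_{n-1}/n$, the $L_{n-1}/n$ cross-terms cancel and we are left with the clean first-order linear recurrence
\begin{equation*}
\Delta_n = h_n + p(1-h_n)\Delta_{n-1}, \qquad \Delta_1=1,
\end{equation*}
for $\Delta_n:=L_n-L_{n-1}$, where $h_n=1/(nH_n)$. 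Unrolling gives
\begin{equation*}
\Delta_l = \sum_{k=0}^{l-1} p^k (1-h_l)(1-h_{l-1})\cdots(1-h_{l-k+1})\,h_{l-k}.
\end{equation*}
Telescoping $L_n=\sum_{l=1}^n\Delta_l$ and isolating the $k=0$ contribution $\sum_{l=1}^n h_l$, which equals $\ln\ln n+O(1)$ by the same integral estimate used in Theorem~\ref{thm1zipfl}, produces formula~\eqref{maineq5zipf}.

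For the asymptotic $L_n=\Theta(\ln\ln n/(1-p))$, the upper bound is immediate: dropping the factors $(1-h_j)\le 1$ yields $\Delta_l\le\sum_{k=0}^{l-1}p^k h_{l-k}$, and swapping the order of summation in $\sum_l\Delta_l$ gives $L_n\le\frac{1}{1-p}\sum_{m=1}^n h_m=O(\ln\ln n/(1-p))$. The main obstacle is the matching lower bound, since the product $\prod(1-h_{l-j})$ can no longer be ignored. I would handle it using $\ln(1-h)\ge-h/(1-h)$ together with $h_j=o(1)$ and the estimate $\sum_{j=m+1}^l h_j=\ln\ln l-\ln\ln m+o(1)$, yielding
\begin{equation*}
\prod_{j=m+1}^{l}(1-h_j) \;\ge\; c\,\Bigl(\frac{\ln m}{\ln l}\Bigr)^{1+o(1)}
\end{equation*}
uniformly for $m$ large. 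Substituting into the change-of-variable form $\Delta_l=\sum_{m=1}^l p^{l-m}h_m\prod_{j=m+1}^l(1-h_j)$, restricting to $m\in[l/2,l]$ (where the product is $\Theta(1)$ and $h_m\asymp h_l$), and carrying out the resulting geometric sum in $p$ delivers $\Delta_l=\Omega\bigl(\frac{1}{(1-p)\,l\ln l}\bigr)$. A final summation over $l$ then recovers $L_n=\Omega(\ln\ln n/(1-p))$ and completes the proof.
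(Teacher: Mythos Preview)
Your proof is correct and follows essentially the same route as the paper: derive the recurrence $L_n = 1 + pL_{n-1} + \frac{1-p}{H_n}\sum \frac{L_{k-1}}{k}$, reduce it (via the $H_n$/$H_{n-1}$ scaling-and-subtracting trick) to $\Delta_n = h_n + p(1-h_n)\Delta_{n-1}$, unroll, and then bound $\Delta_l$ above and below by restricting to indices $m\in[l/2,l]$ where the product $\prod(1-h_j)$ is $\Theta(1)$ and $h_m\asymp h_l$. Your bookkeeping for the $\Delta_n$ recurrence and your upper bound via order-swapping are in fact a bit cleaner than the paper's presentation, but the underlying argument is the same.
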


\begin{proof} ({\bf Theorem~\ref{thm2zipfl}})
We obtain easily the following recurrence
\begin{eqnarray}
L_n &=& \label{maineq2zipf}
1 + p L_{n-1} + \frac{1-p}{H_n} \sum_{k=1}^{n} \frac{1}{k} L_{k-1}.
\end{eqnarray}
Write the recurrence for $L_{n-1}$:
\begin{eqnarray}
L_{n-1} &=& \notag
1 + p L_{n-2} + \frac{1-p}{H_{n-1}} \sum_{k=1}^{n-1} \frac{1}{k} L_{k-1}.
\end{eqnarray}

Multiply these last two recurrences by $H_n, H_{n-1}$ respectively to get
\begin{eqnarray}
H_n L_n 
&=& \notag
H_n + p H_n L_{n-1} + (1-p) \sum_{k=1}^{n} \frac{1}{k} L_{k-1} \\
H_{n-1} L_{n-1} 
&=& \notag
H_{n-1} + p H_{n-1} L_{n-2} + (1-p) \sum_{k=1}^{n-1} \frac{1}{k} L_{k-1}
\end{eqnarray}

Now subtract the second equation from the first and after collecting
similar terms and simplifications
we get
\begin{eqnarray}
L_n 
&=& \notag
\frac{1}{nH_n} + \left( 1+p - \frac{p}{nH_n} \right) L_{n-1} 
-p \frac{H_{n-1}}{H_n} L_{n-2} ,
\end{eqnarray}
with initial conditions $L_0 =0, L_1 = 1$.
In turn, if we set $\Delta_n := L_n - L_{n-1}$ then we 
derive the following recurrence for $\Delta_n$.
\begin{eqnarray}
\Delta_n 
&=& \label{maineq3zipf}
\frac{1}{nH_n} + p \left( 1 - \frac{1}{nH_n} \right) \Delta_{n-1},
\end{eqnarray}
with initial condition $\Delta_1 =1$.
Recurrence~\eqref{maineq3zipf} gives rise
to the following expression for $\Delta_n$
\begin{eqnarray}
\Delta_n 
&=& \label{maineq4zipf}
h_n + 
\sum_{k=1}^n p^k 
\left( 1 - h_n \right)
\left( 1 - h_{n-1} \right)
\cdots
\left( 1 - h_{n-k+1} \right)
h_{n-k},
\end{eqnarray}
where $h_0 := 0$ and $h_k := \frac{1}{kH_k}$.
This completes the proof of Identity~\eqref{maineq5zipf}.

Next we prove the bounds on $L_n$. 
First of all observe that the following inequality holds
\begin{equation}
\label{hn:eq}
2 h_n \leq h_{n/2} \leq 3 h_n .
\end{equation}

Next we estimate the sum 
in the righthand side of Equation~\eqref{maineq4zipf}.
To this end we split the sum 
into two parts: one part, say $S_1$, in the range from $1$ to $n/2$ and
the second part, say $S_2$, from $n/2 +1$ to $n$.
Observe that
\begin{eqnarray}
S_2
&=& \notag
\sum_{k\geq n/2+1} p^k  \left( 1 - h_n \right) \left( 1 - h_{n-1} \right) \cdots \left( 1 - h_{n-k+1} \right) h_{n-k}\\
&\leq& \notag
\sum_{k\geq n/2+1} p^k \leq p^{n/2+1} \frac{1}{1-p} ,
\end{eqnarray}
which is small, asymptotically in $n$, for $p<1$ constant. 

Now consider the sum $S_1$.
\begin{eqnarray}
S_1
&=& \notag
\sum_{k=1}^{n/2} p^k  \left( 1 - h_n \right) \left( 1 - h_{n-1} \right) \cdots \left( 1 - h_{n-k+1} \right) h_{n-k}\\
&\leq& \notag
h_{n/2} \sum_{k=1}^{n/2} p^k \leq 3 h_n \frac{p}{1-p} \mbox{ (Using Inequality~\eqref{hn:eq})} 
\end{eqnarray}
and
\begin{eqnarray}
S_1
&\geq& \notag
h_n \sum_{k=1}^{n/2} p^k  \left( 1 - h_n \right) \left( 1 - h_{n-1} \right) \cdots \left( 1 - h_{n-k+1} \right) \\
&\approx& \notag
h_n \sum_{k=1}^{n/2} p^k  e^{-(h_n +h_{n-1}+ \cdots + h_{n-k+1})} \mbox{ (since $1-x \approx e^{-x}$)} \\
&\approx& \notag
h_n \sum_{k=1}^{n/2} p^k  e^{- \ln \left( \frac{\ln n}{\ln (n/2)} \right)} 
\approx
h_n \sum_{k=1}^{n/2} p^k 
\approx
c h_n \frac{p}{1-p},
\end{eqnarray}
for some constant $c>0$.
Combining the last two inequalities it is easy to derive tight
bounds for $\Delta_n$ and also for $L_n$, since
$L_n = \sum_{k=1}^n \Delta_k$.
This completes the proof of Theorem~\ref{thm2zipfl}.
\qed
\end{proof}

\begin{theorem}[Courteous with right bias]
\label{thm2zipfr}
The expected number $R_n(p)$ of occupied seats by $p$-courteous theatregoers
in an arrangement of $n$ seats
in a row with single entrance, and for all constants $0 \leq p<1$ satisfies 
$$R_n(p) = \Omega \left( \frac{H_n^2}{1 - 0.944p} \right)
\mbox{ and } R_n(p) = O\left( \frac{H_n^2}{1-p} \right)$$
asymptotically in $n$.
\end{theorem}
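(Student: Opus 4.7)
The plan is to follow the same strategy as the proof of Theorem~\ref{thm1zipfr} and Lemma~\ref{costis2}, now with an ansatz whose constant depends on $p$. Conditioning on whether the first entering theatregoer is courteous (probability $p$) or selfish (probability $1-p$), and in the selfish case on her Zipf-right-biased choice of seat $k$ (which blocks $[k+1,n]$ and reduces the rest of the process to a row of length $k-1$ with a single left entrance and the same distribution), one obtains the recurrence
\[
R_n(p) \;=\; 1 + p\, R_{n-1}(p) + \frac{1-p}{H_n}\sum_{k=1}^{n-1}\frac{R_k(p)}{n-k},
\]
with $R_0(p)=0$ and $R_1(p)=1$. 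For $p=0$ this is exactly the recurrence of Lemma~\ref{costis2}, and for $p=1$ it trivially gives $R_n(1)=n$.

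For the upper bound I would prove by induction on $n \ge n_0$ that $R_n(p) \le C\,H_n^2/(1-p)$, with the base cases handled numerically. Substituting the inductive hypothesis and using the identity
\[
\frac{1}{H_n}\sum_{k=1}^{n-1}\frac{H_k^2}{n-k} \;=\; H_n H_{n-1} - S_n, \qquad S_n := \frac{1}{H_n}\sum_{k=1}^{n-1}\frac{H_n^2 - H_k^2}{n-k},
\]
together with the elementary simplification $p\,H_{n-1}^2 + (1-p)\,H_n H_{n-1} = H_n^2 - (H_n + p H_{n-1})/n$, the induction step collapses to
\[
C\, S_n + \frac{C\,(H_n + p\,H_{n-1})}{(1-p)\, n} \;\ge\; 1.
\]
By Lemma~\ref{lem: both bounds on summation}, $S_n \ge \sum_{k=1}^{n-1}(H_n-H_k)/(n-k) \ge 1.408$ for $n \ge 40$, so any $C > 1/1.408$ (for instance $C = 5/7$, as in Lemma~\ref{costis2}) already makes the first term alone exceed $1$, uniformly in $p \in [0,1)$; the second term is a nonnegative bonus that grows as $p \to 1$.

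For the lower bound I would use the ansatz $R_n(p) \ge c\, H_n^2/(1-0.944\, p)$ and the analogous induction. The same simplifications reduce the step to
\[
\frac{1-0.944\, p}{c} \;\ge\; (1-p)\, S_n \;+\; \frac{H_n + p\,H_{n-1}}{n}.
\]
Using the upper estimate $S_n \le 2\sum_{k=1}^{n-1}(H_n-H_k)/(n-k) \le 3.72$ for $n \ge 40$ (again by Lemma~\ref{lem: both bounds on summation}), both sides are linear in $p$, so it suffices to verify the endpoints $p=0$ and $p=1$. Choosing $c$ slightly below $1/3.72$, so that the $p=0$ inequality absorbs the finite-$n$ correction $H_{n_0}/n_0$, then forces, via the $p=1$ endpoint $(1-\alpha)/c \ge H_{n_0-1}/n_0$, the coefficient of $p$ to be at most the advertised $0.944$.

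The main obstacle is not the order of magnitude $\Theta(H_n^2)$, which is essentially inherited from Lemma~\ref{costis2}, but the uniform-in-$p$ dependence. The mismatch between $1-p$ in the upper bound and $1-0.944\,p$ in the lower bound is a direct consequence of the gap $[1.408,\, 3.72]$ in Lemma~\ref{lem: both bounds on summation} between the lower and upper estimates of $\sum_{k=1}^{n-1}(H_n-H_k)/(n-k)$; tightening that gap (whose true limit is $\pi^2/6$) would be the key step toward matching the two bounds.
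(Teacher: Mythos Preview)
Your proposal is correct and follows essentially the same route as the paper: set up the recurrence $R_n(p)=1+pR_{n-1}(p)+\frac{1-p}{H_n}\sum_{k=1}^{n-1}\frac{R_k(p)}{n-k}$, prove by induction that $R_n(p)$ is sandwiched between multiples of $H_n^2$ with $p$-dependent constants, and reduce the inductive step to the bounds $1.408\le \sum_{k=1}^{n-1}\frac{H_n-H_k}{n-k}\le 1.86$ from Lemma~\ref{lem: both bounds on summation}. Your identification of the $0.944$ coefficient as coming from the $1.408$ vs.\ $3.72$ gap is exactly right.

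The only organisational difference is that the paper packages the inductive step as a separate ``lifting'' lemma (if $c_1H_n^2\le R_n(0)\le c_2H_n^2$ then $\frac{4c_1/9}{1-(1-0.214c_1)p}H_n^2\le R_n(p)\le \frac{c_2}{1-p}H_n^2$) rather than plugging the ansatz directly, and it spells out the base case by explicitly computing the degree-$39$ polynomial $R_{40}(p)$ and verifying graphically that it lies between $\frac{2.13}{1-0.945p}$ and $\frac{13}{1-p}$ for all $p\in[0,1)$; you gloss this as ``base cases handled numerically,'' which is the same in spirit but you should be aware that this verification is for a continuum of $p$, not a single value.
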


\begin{proof} ({\bf Theorem~\ref{thm2zipfr}})
Let $R_n(p)$ be the expected number of theatregoers occupying seats 
in a row of $n$ seats,
when seating is biased to the right.
Observe that $R_0(p)=0, R_1(p) =1$ and that the following recurrence is valid
for all $n \geq 1$.
\begin{eqnarray}
R_n(p) 
&=& \notag
1 + p R_{n-1}(p) + \frac{1-p}{H_n} \sum_{k=1}^{n} \frac{1}{n+1-k} R_{k-1}(p)\\
&=&  \label{maineq2zzipf} 
1 + p R_{n-1}(p) + \frac{1-p}{H_n} \sum_{k=1}^{n-1} \frac{1}{n-k} R_{k}(p)
\end{eqnarray}
Before proving the theorem we proceed with the following lemma.

\begin{lemma}\label{lem: bounds from previous recurrences}
Let $R_n=R_n(0), R_n(p)$ be the solutions to the recurrence relations \eqref{maineq1zz}, \eqref{maineq2zzipf},  respectively. Then for every $0\leq p <1$ and for every constants $c_1, c_2 >0$ with $c_1<4$, we have 
if 
$
\left( \forall n \geq 40, ~c_1 H_n^2 \leq R_n \leq c_2 H_n^2  \right)
$
then
$$
\left( \forall n \geq 40, ~\frac{4c_1/9}{1-(1-0.214c_1)p} H_n^2 \leq R_n(p) \leq \frac{c_2}{1-p} H_n^2 \right).
$$
assuming that the bounds for $R_n$ hold for $n=40$.\footnote{Constant $c_1$ is scaled by 4/9 only to satisfy a precondition in a subsequent theorem.}
\end{lemma}

\begin{proof} ({\bf Lemma~\ref{lem: bounds from previous recurrences}})
The proof is by induction on $n$, and the base case $n=40$ is straightforward. 

For the inductive step, suppose that the bounds for $R_n(p)$ are true for all integers up to $n-1$, and fix some $\Box \in \{\geq, \leq\}$ corresponding to the bounding constants $c \in \{c_1, c_2\}$ and $x \in \left\{\frac{c_1}{1-(1-0.214c_1)p},\frac{c_2}{1-p}\right\}$ respectively. 

The we have
\begin{align}
R_n(p) 
&
= 1+pR_{n-1}(p)+\frac{1-p}{H_n} \sum_{k=1}^{n-1} \frac{R_k(p)}{n-k} \tag{definition of $R_n$} \\
&
\Box~~ 1+p x H_{n-1}^2+\frac{(1-p)x}{H_n} \sum_{k=1}^{n-1} \frac{H_k^2}{n-k} \tag{Inductive Hypothesis} \\
&
\Box~~ 1+p x H_{n-1}^2+(1-p)x \left( H_n^2 - \frac1c \right) \tag{Preconditions} \\
&
\Box~~ 
x \left( pH_{n-1}^2 + (1-p) H_n^2 \right)
1 - \frac{(1-p)x}c \label{equa: conclusion to bound}
\end{align}

Now consider $\Box ="\geq"$, and observe that 
\begin{align}
R_n(p) 
&\stackrel{\eqref{equa: conclusion to bound}}{\geq}
x \left( pH_{n-1}^2 + (1-p) H_n^2 \right)
1 - \frac{(1-p)x}{c_1} \notag \\
& = x H_n^2 + xp(H_{n-1}^2 - H_n^2) + 1 - \frac{(1-p)x}{c_1} \notag \\
& = x H_n^2 + xp(H_{n-1} - H_n)(H_{n-1} + H_n) + 1 - \frac{(1-p)x}{c_1} \notag \\
& = x H_n^2 - xp \frac{H_{n-1} + H_n}{n} + 1 - \frac{(1-p)x}{c_1} \notag \\
& \geq x H_n^2 - 2xp \frac{H_n}{n} + 1 - \frac{(1-p)x}{c_1} \notag \\
& \geq x H_n^2 - 0.214 xp + 1 - \frac{(1-p)x}{c_1} \tag{Since $\frac{H_n}{n} < 0.106964$, for $n\geq 40$ } \\
& \geq x H_n^2 \tag{$x=\frac{4c_1/9}{1-(1-0.214c_1)p} \leq \frac{c_1}{1-(1-0.214c_1)p}$}
\end{align}

Finally, we consider $\Box ="\leq"$, and we have 
\begin{align}
R_n(p) 
&\stackrel{\eqref{equa: conclusion to bound}}{\leq}
x \left( pH_{n-1}^2 + (1-p) H_n^2 \right)
1 - \frac{(1-p)x}{c_2} \notag \\
& \leq x H_n^2 + 1 - \frac{(1-p)x}{c_2} \notag \\
& = x H_n^2 \tag{$x=\frac{c_2}{1-p}$}
\end{align}
This completes the proof of Lemma~\ref{lem: bounds from previous recurrences}.
\qed
\end{proof}

Now we proceed with the main proof of Theorem~\ref{thm2zipfr}.
Recall that by Lemma~\ref{costis2} we have $\frac{100}{383} H_n^2 \leq R_n \leq \frac57 H_n^2$ for all $n \geq 40$, where $R_n$ is the solution to the recurrence \eqref{maineq1zz}. But then, according to Lemma~\eqref{lem: bounds from previous recurrences}, it suffices to verify that for all $0 \leq p < 1$, both bounds below hold true
$$
~\frac{4c_1/9}{1-(1-0.214c_1)p} H_{40}^2 \leq R_{40}(p) \leq \frac{c_2}{1-p} H_{40}^2 
$$
where $c_1=100/383$ and $c_2=5/7$. In other words, it suffices to verify that 
$$
\frac{2.13}{1-0.945 p}
\leq R_{40}(p)
\leq 
\frac{13}{1-p}, ~~~\forall~ 0 \leq p < 1.
$$
Expression $R_{40}(p)$ is a polynomial on $p$ of degree 39, which can be computed explicitly from recurrence~\eqref{maineq2zzipf}. 
\begin{align*}
R_{40}(p)= &
3.70962710339202 \times 10^{-7} p^{39}
+3.0614726926339265\times 10^{-6} p^{38}
+0.0000139932 p^{37}\\
&+0.0000467865 p^{36}
+0.000127738 p^{35}+0.000301798 p^{34}+0.000639203 p^{33} \\
&+0.00124237 p^{32}+0.0022527 p^{31}+0.00385706 p^{30}
+0.00629362  p^{29} 
+0.00985709 p^{28}\\
&+0.0149033 p^{27}
+0.0218533 p^{26}
+0.0311969 p^{25}
+0.0434963 p^{24} 
+0.0593899 p^{23}\\
&+0.0795964 p^{22}
+0.104921 p^{21}
+0.136261 p^{20}
+0.174618 p^{19}
+0.221108 p^{18} 
+0.27698 p^{17} \\
&+0.343639 p^{16}
+0.422678 p^{15}
+0.51592 p^{14}
+0.625477 p^{13}
+0.753831 p^{12}
+0.903948 p^{11}\\
&+1.07944 p^{10} 
+1.28482 p^9
+1.52585 p^8+1.81016 p^7+2.14819 p^6+2.55498 p^5+3.05352 p^4 \\
&+3.68202 p^3+4.51248 p^2 
+5.7117 p+7.8824.
\end{align*}

\begin{figure}[!htb]
\begin{center}
\includegraphics[width=8cm]{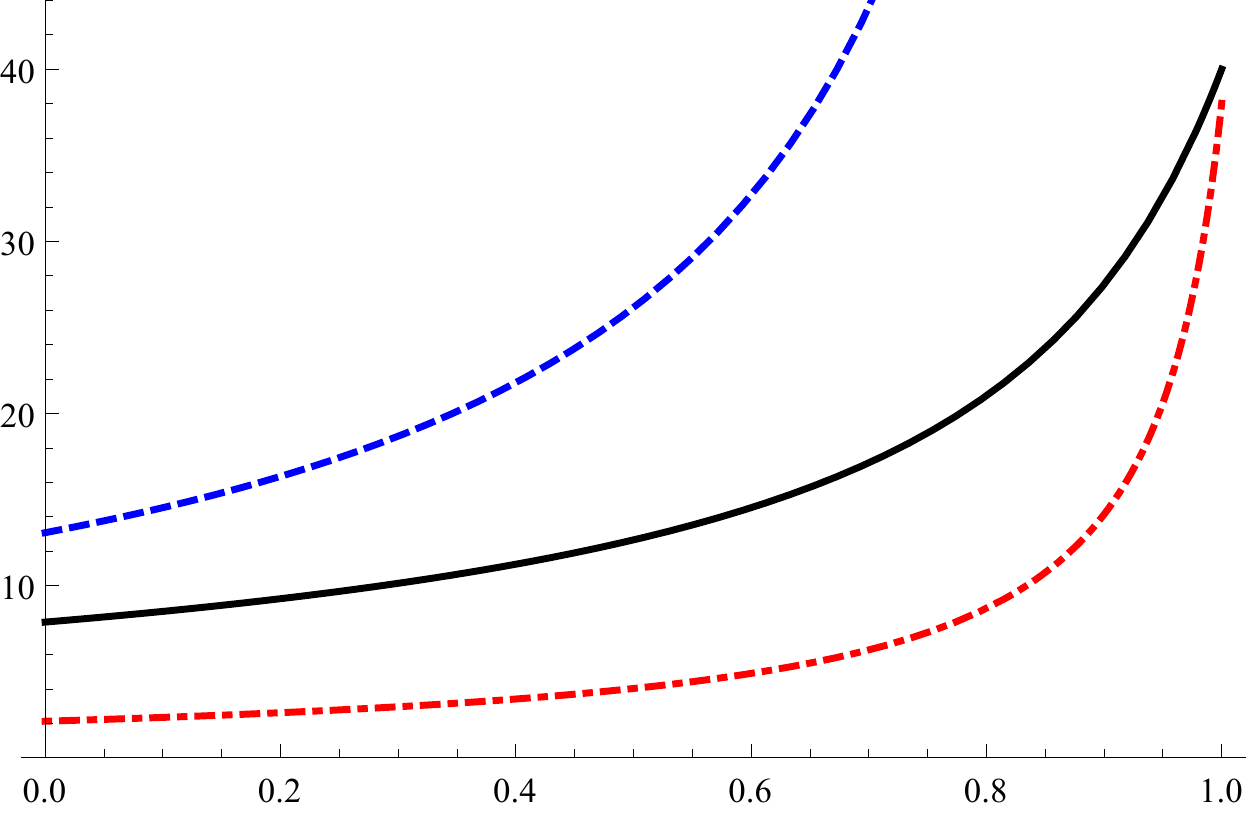}
\caption{The black solid line represents polynomial $R_{40}(p)$, the red dot-dashed line represents $\frac{4c_1/9}{1-(1-0.214c_1)p} H_{40}^2$, while the blue dotted line represents $\frac{c_2}{1-p} H_{40}^2$.}
\label{fig: sandwich R40}
\end{center}
\end{figure}
Then we can draw $R_{40}(p)$ to verify that it is indeed sandwiched between $\frac{2.13}{1-0.945 p}$ and $\frac{13}{1-p}$, for all $0 \leq p < 1$, as Figure~\ref{fig: sandwich R40} confirms 
Note that $\frac{13}{1-p}$ is unbounded as $p\rightarrow 1$, and hence its value exceeds $R_{40}(1)$ for $p$ large enough, here approximately for $p\geq 0.7$.
This completes the proof of Theorem~\ref{thm2zipfr}. \qed
\end{proof}

%
%

\section{The Occupancy of a Theater}
\label{theater:sec}

Given the previous results it is now easy to analyze the occupancy of a theater.
A typical theater consists of an array of rows separated by aisles.
This
naturally divides each row into sections which either have one entrance (e.g.,
when the row section ends with a wall) or two entrances. 
For example
in Figure \ref{lipari-fig} we see the Greek theatre on Lipari consisting of twelve
rows each divided into two one entrance sections and three two entrance sections.
In a sequential arrival model of
theatregoers, we assume that a theatergoer chooses a row and an entrance to the row by some arbitrary strategy. If she finds the row blocked at the entrance, then she moves on to the other entrance or another row. Then, the resulting occupancy of the theater will be equal to the sum of the number of occupied seats in each row of each
section. 
These values depend only on the length of the section. This
provides us with a method of estimating the total occupancy of the theatre. 


\begin{figure}[!h]
\begin{center}
\includegraphics[width=8cm]{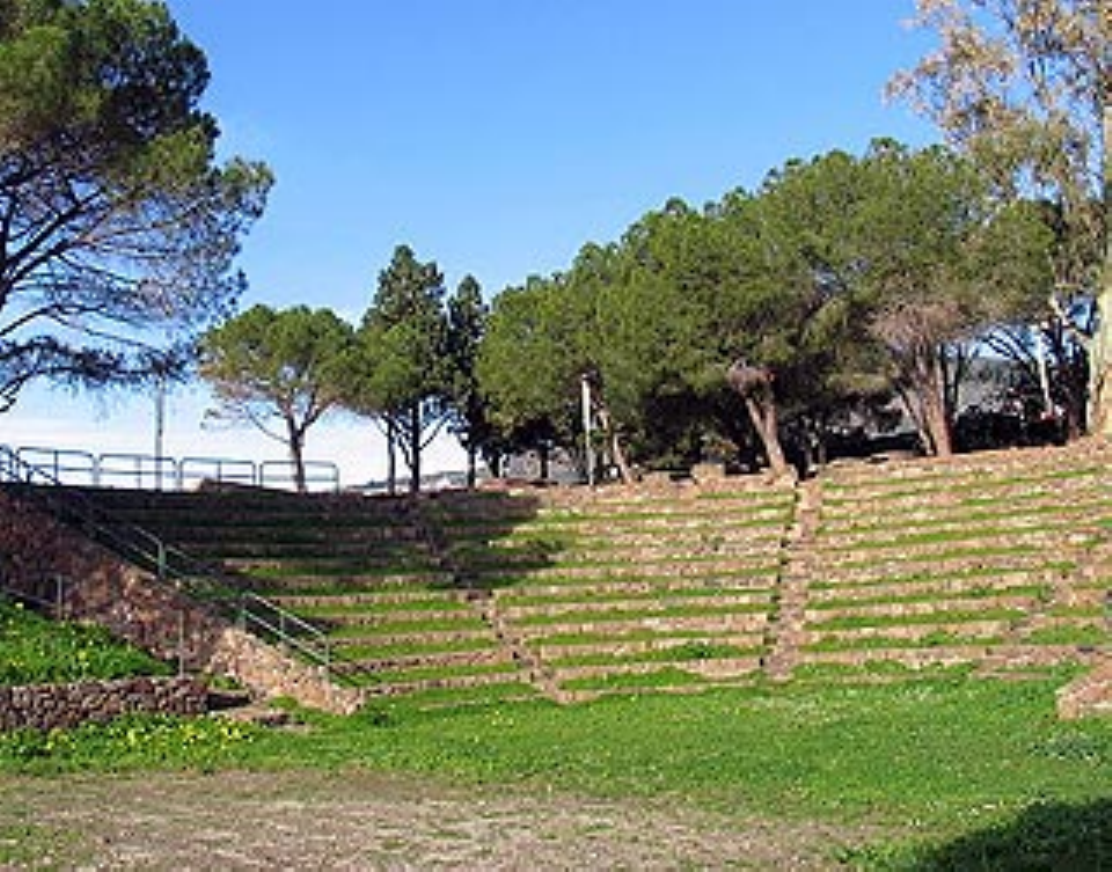}
\end{center}
\caption{The Greek theatre on Lipari Island.}
\label{lipari-fig}
\end{figure}

For example, for the Lipari theatre if each row section seats $n$ theatregoers then
we get the following:
\begin{corollary}
Consider a
theater having twelve rows with three aisles where each 
section contains $n$ seats. For firxed $0 <p<1$, the
expected number of occupied seats
assuming $p$-courteous
theatregoers is given by the expression
\begin{equation}
\label{pach3a}
-\frac{36}{1-p} + 96\frac{H_n - \ln (1-p)}{1-p},
\end{equation}
asymptotically in $n$.
\qed
\end{corollary}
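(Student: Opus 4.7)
The proof plan is a short combinatorial assembly of the two single-row results already established. First I would record the section structure implicit in Figure~\ref{lipari-fig}: each of the twelve rows is split by its aisles into two single-entrance sections (those abutting the outer walls) and three two-entrance sections (those lying between consecutive aisles). The theatre therefore contains $12 \cdot 2 = 24$ one-entrance sections and $12 \cdot 3 = 36$ two-entrance sections, each of length $n$.

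Next I would justify the step highlighted by the paragraph introducing Section~\ref{theater:sec}: under the stipulated sequential arrival protocol each section evolves as an independent copy of the corresponding single-row process. A theatregoer who, through whatever arbitrary row-and-entrance strategy, finally commits to entering a particular section through one of its open ends then chooses a vacant seat uniformly at random, asks courteous occupants to budge, and settles either at the selected seat or at the nearest blocked position---exactly the dynamics analyzed in Theorems~\ref{thm01p} and~\ref{thm02p}. Hence the eventual expected occupancy of any given section depends only on its length and on whether it has one or two entrances, and linearity of expectation lets us sum contributions across sections.

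Assembling these ingredients gives
\begin{equation*}
24\,E_n + 36\,F_n
= 24\,\frac{H_n - \ln(1-p)}{1-p}
+ 36\!\left(-\frac{1}{1-p} + 2\,\frac{H_n - \ln(1-p)}{1-p}\right)
= -\frac{36}{1-p} + 96\,\frac{H_n - \ln(1-p)}{1-p},
\end{equation*}
where $E_n$ and $F_n$ denote the asymptotic expressions supplied by Theorems~\ref{thm01p} and~\ref{thm02p} respectively. This is exactly Expression~\eqref{pach3a}. The only conceptual point requiring genuine attention is the section-independence claim in the second paragraph; once that is granted, the corollary reduces to the routine algebraic combination above, which I expect to be the entirety of the ``hard'' part of the proof.
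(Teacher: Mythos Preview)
Your proposal is correct and follows exactly the approach the paper intends: the corollary is stated with an immediate \qed, relying on the section decomposition spelled out just before it (twelve rows, each with two one-entrance and three two-entrance sections), together with the asymptotic formulas of Theorems~\ref{thm01p} and~\ref{thm02p} summed by linearity of expectation. Your arithmetic $24E_n + 36F_n = -\tfrac{36}{1-p} + 96\,\tfrac{H_n - \ln(1-p)}{1-p}$ is precisely the intended computation.
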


\section{Conclusions and Open Problems}
\label{other:sec}

There are several interesting open problems worth investigating
for a variety of models reflecting alternative and/or changing 
behaviour of the
theatregoers, as well as their behaviour as a group.
Also problems arising from
the structure (or topology) of the theatre are interesting.
In this section we 
propose several open problems and directions for further research.
 

While we considered the uniform, geometric and Zipf distributions above,
a natural extension of the theatregoer model is to arbitrary distributions
with the probability that a theatregoer selects seat numbered
$k$ is $p_k$.
For example, theatregoers may prefer seats either not too close or too far from the stage.
These situations might introduce a bias 
that depends on the two dimensions of the position selected.
It would be interesting to compare the results obtained to
the actual observed occupancy distribution of a real open seating theatre
such as movie theatres in North America.

Another model results when the courtesy of a theatregoer
depends on the position selected, e.g., the further away from
an entrance the theatregoer is seated the less likely (s)he is
to get up.
Another interesting question arises when theatregoers
not only occupy seats for themselves but also 
need to reserve seats for their friends in a group.
Similarly, the courtesy of the theatregoers may now
depend on the number of people in a group, e.g.,
the more people in a group the less likely for all
theatregoers to get up to let somebody else go by. 
Another possibility is to
consider the courteous theatregoers problem in an
arbitrary graph $G= (V, E)$. Here, the seats
are vertices of the graph. Theatregoers
occupy vertices of the graph while new incoming theatregoers
occupy vacant vertices when available and may request 
sitting theatregoers to get up so as to allow them passage
to a free seat. Further, the set of nodes of the graph
is partitioned into a set of rows or paths of seats and a set
of ``entrances'' to the graph.
Note that in this more general case there could be
alternative paths to a seat. 
In general graphs, algorithmic questions arise such as 
give an algorithm that will maximize the 
percentage of occupied seats given that all theatregoers
are selfish.

%



\bibliographystyle{plain}
\bibliography{refs}

\appendix
\section{Proof of Lemma~\ref{lem: both bounds on summation} }

In what follows we fix some $\epsilon>0$.
Below we use that for every $f:\mathcal{R}_+\mapsto \mathcal{R}_+$ which is monotone, we have  
\begin{equation}\label{equa: integrals and sums}
\sum_{k=1}^{n-1} f(k) \leq \int_{1}^{n} f(t) \d t \leq \sum_{k=2}^{n} f(k).
\end{equation}

Then we observe that for every fixed $n$, expression $\frac{H_n-H_k}{n-k}$ is decreasing in $k$. This is because $\frac{-H_k}{n-k}$ is clearly decreasing, and the rate of change dominates that of the increasing expression $\frac{H_n}{n-k}$, since $n$ is fixed. That will be shortly combined with observation \eqref{equa: integrals and sums}. 

First we upper bound the sum above. 
\begin{align}
\sum_{k=1}^{n-1}\frac{H_n-H_k}{n-k}
&
=
\frac{H_n-H_{n-1}}{1}+ \sum_{k=1}^{n-2}\frac{H_n-H_k}{n-k} \notag \\
&=
\frac{1}{n}+ \sum_{k=1}^{n-2}\frac{H_n-H_k}{n-k} \notag \\
&\leq 
\frac{1}{n}+ \int_{1}^{n-1}\frac{H_n-H_k}{n-k} \d k  \tag{by \eqref{equa: integrals and sums}, since $\frac{H_n-H_k}{n-k}$ is monotone }\\
&= 
\frac{1}{n}+ \int_{1}^{n-1}\frac1{n-k}\sum_{t=k+1}^n \frac1t \d k  \notag \\
&\leq 
\frac{1}{n}+ \int_{1}^{n-1}\frac1{n-k}\int_{k+1}^{n+1} \frac1t \d t\d k \tag{by \eqref{equa: integrals and sums}, since $\frac{1}{t}$ is monotone } \\
&= 
\frac{1}{n}+ \int_{1}^{n-1}\frac1{n-k} \left( \ln (n+1) - \ln (k+1) \right) \d k \notag
\end{align}
By solving the last integral it is easily seen that the
last term is equal to

\begin{align}
&
\frac{1}{n}+ \ln(n+1)\ln \left(1+\frac{1}{n-1}\right) + \ln 2 \ln \left(1+\frac2{n-1}\right) 
- \Li_2\left(\frac2{n+1}\right) + \Li_2\left(1-\frac1{n}\right) \notag \\
&\leq 
2 \ln(n+1)\ln \left(1+\frac2{n-1}\right) 
- \Li_2\left(\frac2{n+1}\right) + \Li_2\left(1-\frac1{n}\right) \notag \\
&\leq 
\frac{4  \ln(n+1)}{n-1}
- \Li_2\left(\frac2{n+1}\right) + \Li_2\left(1-\frac1{n}\right) \label{equa: last upper bound}
\end{align}

Next we observe that both $\Li_2\left(\frac2{n+1}\right), \Li_2\left(1-\frac1{n}\right)$ are non negative, and in particular $\Li_2\left(\frac2{n+1}\right)$ is decreasing with 
$$
\lim_{n\rightarrow \infty}\Li_2\left(\frac{2}{n+1}\right) = 0,
$$
and $\Li_2\left(1-\frac1{n}\right)$ is increasing with 
$$\lim_{n\rightarrow \infty}\Li_2\left(1-\frac1n\right) = \frac{\pi^2}6.$$
Since also $\frac{4  \ln(n+1)}{n-1}$ is positive and tends to 0, we conclude that for big enough $n$ we have $\sum_{k=1}^{n-1}\frac{H_n-H_k}{n-k} \leq \pi^2/6 + \epsilon$. In particular, expression~\eqref{equa: last upper bound} is at most $1.86$ for all $n\geq 40$.

Next we lower bound the sum. 
\begin{align}
\sum_{k=1}^{n-1}\frac{H_n-H_k}{n-k}
&\geq 
\sum_{k=2}^{n-1}\frac{H_n-H_k}{n-k} \notag \\
&\geq 
\int_{1}^{n-1}\frac{H_n-H_k}{n-k} \d k  \tag{by \eqref{equa: integrals and sums}, since $\frac{H_n-H_k}{n-k}$ is monotone }\\
&= 
\int_{1}^{n-1}\frac1{n-k}\sum_{t=k+1}^n \frac1t \d k  \notag  \\
&\geq 
\int_{1}^{n-1}\frac1{n-k}\int_{k}^{n-1} \frac1t \d t\d k \tag{by \eqref{equa: integrals and sums}, since $\frac{1}{t}$ is monotone } \\
&= 
\int_{1}^{n-1}\frac1{n-k} \left( \ln (n-1) - \ln (k) \right) \d k \notag  \\
&= 
\ln(n-1)\ln \left(1-\frac{1}{n}\right) - \Li_2\left(\frac1{n}\right) + \Li_2\left(1-\frac1{n}\right) 
\label{equa: last lower bound}
\end{align}
As before, one can see that as $n$ tends to infinity, both expressions $\ln(n-1)\ln \left(1-\frac{1}{n}\right)$ and $\Li_2\left(\frac1{n}\right)$ tend to 0, while $\Li_2\left(1-\frac1{n}\right) $ tends to $\pi^2/6$. Hence, for big enough $n$ we have 
$\sum_{k=1}^{n-1}\frac{H_n-H_k}{n-k} \geq \pi^2/6 - \epsilon$. In particular, expression \eqref{equa: last lower bound} is at least 1.408 for all $n \geq 40$. 
\qed

\end{document}